\DeclareSymbolFont{Eulerscripteusm10}{U}{eus}{m}{n}
\DeclareMathSymbol{\euM}{\mathord}{Eulerscripteusm10}{"4D}
\DeclareMathSymbol{\euD}{\mathord}{Eulerscripteusm10}{"44}
\DeclareMathAlphabet{\mathbxit}{\encodingdefault}{\rmdefault}{bx}{it}   
\newcommand{\ubi}{\mathbxit{u}}
\newcommand{\dbi}{\mathbxit{d}}
\newcommand{\Ubi}{\mathbxit{U}}
\newcommand{\Dbi}{\mathbxit{D}}
\DeclareMathAlphabet{\pazocal}{OMS}{zplm}{m}{n}   
\newcommand{\Ncal}{\pazocal{N}}
\newcommand{\Ocal}{\pazocal{O}}
\newtheorem*{prop}{Proposition}
\newtheorem*{lem}{Lemma}
\theoremstyle{remark}
\newtheorem*{rem}{\bfseries\textup{Remark}}
\newtheorem*{defi}{\bfseries\textup{Definition}}
\newtheorem{exa}{\bfseries\textup{Example}}
\definecolor{labelkey}{rgb}{0,.56,.7}
\newcommand*{\at}{@}
\newcommand{\papa}[2]{\frac{\partial#1}{\partial#2}}
\newcommand{\nn}{\nonumber}
\def\dg{\dagger}
\def\df{\overset{\mathrm{df}}{=}}
\newcommand{\id}{\mathop{{\mathrm{id}}}\nolimits}
\newcommand{\aadg}[1]{a_{#1}^{\dg}}
\newcommand{\adg}{{a^\dg}}         
\newcommand{\Adg}{{A^\dg}}         
\def\loD{\underline{\smash\euD}}
\def\hiD{\overline{\euD}}
\newcommand{\ket}[1]{\mathop{|#1\rangle}\nolimits}
\newcommand{\ave}[1]{\langle#1\rangle}
\def\bbC{\mathbb{C}}
\def\bbZ{\mathbb{Z}}
\def\bbR{\mathbb{R}}
\def\bbN{\mathbb{N}}
\def\a{\alpha}
\def\d{\delta}
\def\la{\lambda}
\sodef\so{}{.065em}{.4em plus1em}{2em plus.1em minus.1em}
\sodef\sosmalltitle{}{.065em}{.4em plus1em}{2em plus.1em minus.1em}
\begin{document}

\title[\sosmalltitle{Hiking a generalized Dyck path}]{\Large\so{Hiking a generalized Dyck path: A tractable way of calculating multimode boson evolution operators}}

\author{Kamil Br\'adler}
\email{kbradler\at ap.smu.ca}

\address{
    Department of Astronomy and Physics,
    Saint Mary's University,\\
    Halifax, Nova Scotia, B3H 3C3, Canada
    }

\keywords{Dyck path,  Boson evolution operator, Boson normal ordering, Integer lattice}

\begin{abstract}
  A time evolution operator in the interaction picture is given by exponentiating an interaction Hamiltonian $H$. Important examples of Hamiltonians, often encountered in quantum optics, condensed matter and high energy physics, are of a general form $H=r(\Adg-A)$, where $A$ is a multimode boson operator and $r$ is the coupling constant. If no simple factorization formula for the evolution operator exists, the calculation of the evolution operator is a notoriously difficult problem. In this case the only available option may be to Taylor expand the operator in $r$ and act  on a state of  interest $\psi$. But this brute-force method quickly hits the complexity barrier since the number of evaluated expressions increases exponentially. We relate a combinatorial structure called Dyck paths to the action of a boson word (monomial) and a large class of monomial sums on a quantum state~$\psi$. This allows us to cross the exponential gap and make the problem of a boson unitary operator evaluation computationally tractable by achieving polynomial-time complexity for an extensive family of physically interesting multimode Hamiltonians. We further test our method on a cubic boson Hamiltonian whose Taylor series is known to diverge for all nonzero values of the coupling constant and an analytic continuation via a Pad\'e approximant must be performed.
\end{abstract}

\maketitle

\section{Introduction}\label{sec:intro}

\thispagestyle{empty}

The calculation of an evolution unitary operator $V=\exp{[-iH]}$, where $H$ is an interaction Hamiltonian in the interaction picture,   is a generic problem encountered in quantum field theory, quantum optics or condensed matter physics, to name a few. Of considerable interest are single and mainly multimode boson Hamiltonians  describing the interaction between two and more boson field modes. The Hamiltonian $H$ is a Hermitian operator and can be written as $H=\nu\Adg+\overline\nu A$, where $\nu\in\bbC$ and the bar denotes complex conjugation. The operator $A$ is a function of one or more boson creation and annihilation operators $a_i$ and $a^\dg_i$, respectively, satisfying the Weyl-Heisenberg algebra $[a_i,a^\dg_j]=\d_{ij}\id$. The index $i$ denotes the $i$-th boson mode and $\id$ is an identity operator (not necessarily represented by a finite-dimensional identity matrix) commuting with $a_i$ and $a^\dg_i$. Because the Hamiltonian is often a sum of two or more non-commuting terms, the problem of finding an explicit form of $V$ is not a straightforward  one. The standard way of solving it is to use perturbation theory by expanding $V$ as the Dyson series~\cite{duncan2012conceptual,louisell1973quantum}. It can also can be converted to the problem of factorization of the operator exponential function known as the BCH formula (after Baker-Campbell-Hausdorff)~\cite{hall2003lie}. Strictly speaking, it is the Zassenhaus formula~\cite{magnus1954exponential,blanes2009magnus} that provides an explicit way of factorizing operator or matrix exponentials but it is not  the only, see~\cite{trotter1959product,wei1963lie,fer1958resolution,suzuki1976generalized,wilcox1967exponential,friedrichs1953mathematical}.  This extensive topic is also reviewed in~\cite{Mielnik1970} and more recently in~\cite{blanes2009magnus,bonfiglioli2011topics}.

By setting $\nu=s+ir$ we will restrict our attention to the evolution operator
\begin{equation}\label{eq:generalEvolutionOp}
  V=\exp{\big[r(A^\dg-A\big)\big]}.
\end{equation}
From all the existing approaches to evaluate Eq.~(\ref{eq:generalEvolutionOp}), the Taylor series expansion  around $r=0$ happens to be (most likely) the least sophisticated attempt to compute the exponential. But if no \emph{simple} factorization theorem exists (for example when $A$ and $\Adg$ commute) and other methods fail or are equally inefficient, it may be the only available option. As a matter of fact, short-time evolution is often studied by calculating the first few  terms of a Taylor expansion. However, for longer times the exponential growth in $k$ of the number of summands $\big(A^\dg\pm A\big)^k$ quickly makes such a calculation  intractable (note that we do not claim that all methods to evaluate $V$ are exponential in time).

The situation could have been saved by the following procedure. The expression $\big(A^\dg\pm A\big)^k$ can be brought to \emph{normal} form, where all creation operators $a^\dg_i$, forming $A$ and $\Adg$, are to the left of all annihilation operators $a_i$. We will denote the corresponding operation by $\Ncal$. The main advantage of normally ordered operators lies in an easy calculation of their action on many  states of interest, in particular, the eigenvector of~$a$ (a coherent state $\ket{\a}$ with the Minkowski vacuum as its special case). However, in general it is not easy to find  the normal form. As an example, take a boson word (also called product, string or monomial) $w=a^2\adg^3a$. The action of $\Ncal$ results in
\begin{equation}\label{eq:orderedExample}
\Ncal[w]=\adg^3a^3+6\adg^2a^2+6\adg a.
\end{equation}
A straightforward, but not recommendable, way to obtain~(\ref{eq:orderedExample}) is to commute through the forest of operators. Eventually, one can use shortcuts in the form of  derived commutation rules to make the process less painful~\cite{louisell1973quantum}. A better way is to take advantage of a clever differential representation of the aforementioned algebra~\cite{louisell1973quantum}
\begin{equation}\label{eq:HeisRep}
  a\leftrightarrow\a+\papa{}{\overline\a},\quad\adg\leftrightarrow\overline\a,
\end{equation}
where the bar denotes complex conjugation. Then the above calculation can be done in a blink of an eye (the celebrated Fock-Bargmann representation is another option~\cite{perelomov1986generalized}) but this is the exception rather than the rule. In the case of $\Ncal\big[\big(A^\dg\pm A\big)^k\big]$, the commutation relations, the differential operator representation~(\ref{eq:HeisRep}) or any other technique, such as Wick's theorem~\cite{duncan2012conceptual}, would lead to the right answer for but it may be difficult to obtain a general form for an arbitrary $A$ and $k$ or the computational complexity of the $\Ncal$ operator can be insurmountable.

The problem of boson  normal ordering has a long history and a number of techniques were developed~\cite{cahill1969ordered,katriel73,blasiak2005boson,blasiak2005combinatorics,mikhailov1983ordering,Witschel75,mansour2012combinatorics,varvak2005rook,Asakly,mansour2011commutation}. Certain special cases of $\Ncal\big[\big(A^\dg\pm A\big)^k\big]$ were previously studied  resulting in elegant combinatorial formulas.  But the general treatment for multimode boson operators is missing and the calculation of evolution operators generated by such Hamiltonians is the main goal of this paper. 

The idea behind our technique is to study $\big(A^\dg\pm A\big)^k$, and consequently the  expansion of Eq.~(\ref{eq:generalEvolutionOp}), in the weak sense, that is, by acting on a state of interest. After all, this is the situation most often encountered when solving physical problems -- the state is usually a vacuum state $\ket{\mathrm{vac}}$ (or a ground state defined  in general by $A\ket{\mathrm{vac}}=0$) and one is interested either in the vacuum expectation value of an operator or a unitary evolution of the  state it acts on. So our method lacks the generality of the factorization theorems for Eq.~(\ref{eq:generalEvolutionOp}) as a standalone operator~\cite{bonfiglioli2011topics,hall2003lie,blanes2009magnus} or of the  formulas developed to normal order certain special cases of  $\big(\Adg\pm A\big)^k$. However, the virtue of our technique is that it enables us to analytically  calculate
\begin{equation}\label{eq:generalOpSumWeak}
  \big(A^\dg\pm A\big)^k\psi^{(0)}
\end{equation}
for an \emph{arbitrary} multimode boson monomial $A$, where $\psi^{(0)}=\ket{\mathrm{vac}}$ defined above as $A\ket{\mathrm{vac}}=0$ or any state from the semi-infinite or finite tower of states generated by the repeated action of $\Adg$ on $\ket{\mathrm{vac}}$. In this work we will focus on the vacuum state as it the most often encountered  scenario. Equally importantly, the calculation will be shown to be tractable since it can be executed with polynomial-time complexity $\Ocal(dk^3)$ if $\Adg$ or $A$ is a monomial of the length $d$. The method does not depend on a particular algebraic structure of $A$ apart from the one inherited from $a,\adg$ satisfying the Weyl-Heisenberg algebra.  Our approach is combinatorial.  We will show that there exists a common structure called a generalized Dyck path, which is a subject of study in analytic combinatorics, computer science and stochastic processes, among others~\cite{stanley1999enumerative,mansour2008normal,flajolet2009analytic,mohanty1979lattice}. This insight will lead us directly to a recursive summing formula allowing to analytically calculate Eq.~(\ref{eq:generalOpSumWeak}) without the need for  normal ordering. The most important consequence is that it can be used to efficiently find (in polynomial time)  an analytical expression for the expanded evolution operator~(\ref{eq:generalEvolutionOp})
$$
V\psi^{(0)}=\sum_{k=0}^K{r^k\over k!}(A^\dg-A\big)^k\psi^{(0)},
$$
for any choice of $0<K<\infty$ and $0\leq r<\infty$ where again $\psi^{(0)}=\ket{\mathrm{vac}}$ or a tower of states generated by $\Adg$ acting on $\ket{\mathrm{vac}}$.

Multiboson Hamiltonians studied in this paper were often identified as \emph{quasi-exactly solvable}~\cite{alvarez2002quasi,bogoliubov1996exact} with a close relation to the boson realizations of polynomial algebras~\cite{links2003algebraic,lee2010polynomial}. Among the applicable physical scenarios, Bose-Einstein condensates and nonlinear optical systems are the most prominent. Quasi-exactly solvable models have been extensively studied~\cite{shift1989,gonzalez1994quasi,ushveridze1994quasi} and the difference compared to exactly solvable models is that the spectrum problem of such a Hamiltonian cannot be formulated as the usual eigenvalue problem in matrix algebra. However, there exists an invariant subspace of functions that are mapped to themselves and so it can be ``decoupled'' from the rest of an otherwise infinite-dimensional matrix Hamiltonian. This finite subspace  can subsequently be analytically diagonalized and part of the spectrum is recovered.

Generalized Dyck paths are introduced in Sec.~\ref{sec:DyckCount} and some of their combinatorial properties are summarized. In Sec.~\ref{sec:DyckBosons} we link Dyck paths to the action of boson words, show that all monomials behave as ladder operators  and uncover a large class of sums of multimode boson monomials that behave as ladder operators as well. This enables us to derive our main result in Sec.~\ref{sec:DyckEval}: the evaluation of Eq.~(\ref{eq:generalOpSumWeak}) for monomials and their sums. In Sec.~\ref{sec:complex} we discuss the computational complexity of our approach. We illustrate the evaluation on several examples of boson operators~$A$ and compute Eq.~(\ref{eq:generalOpSumWeak}) for low values of $k$ (to be able to verify the results by hand) and for some very high values of $k$ (to show the speed-up offered by our method). We  also show that the fast evaluation of Eq.~(\ref{eq:generalOpSumWeak}) leads to an efficient calculation of a boson evolution operator for previously hard-to-reach values of the coupling constant~$r$ if calculated by a series expansion. Our final example is the evolution operator for the three-photon degenerate process whose Taylor expansion is known to diverge for any value of the coupling constant. Yet our approach is still useful for the calculation of an analytically continued unitary operator using the Pad\'e approximation proposed in~\cite{braunstein1987generalized}.

\section{Generalized Dyck paths and their combinatorics}
\subsection{Counting Dyck paths}\label{sec:DyckCount}

Let's introduce an integer lattice path known as a Dyck path and some of its properties~\cite{stanley1999enumerative,mansour2008normal,flajolet2009analytic,mohanty1979lattice}. The  lattice for Dyck paths is the set of all  integer points $\bbN\times\bbN\subset\bbR^2$. A Dyck  path $\euD(k)$ is a path starting at $(0,0)$ and ending at $(k,0)$, such that the only allowed steps (also called segments or letters in this paper) are $U=(1,1)$ and $D=(1,-1)$, that is, going to the nearest integer point northeast or southeast. The constraint on a Dyck path is that it never falls below the $x$ axis, but it may touch it at any number of integer points between $0$~and~$k$. A more general Dyck path $\euD(k,\d)$ starts at $(0,0)$ but is allowed to end at any point $(k,\d)$  following the same constraint. Finally, what we will call a generalized Dyck path $\euD(k,\d_1,\d_2)$ is a lattice path starting at $(0,\d_1)$ and ending at $(k,\d_2)$, where $k,\d_1,\d_2\geq0$, and as before, the path is constrained to the positive quadrant including the $x$ axis. The value of $\d_1$ can be any non-negative integer but the value of $\d_2$ depends on $k$ and $\d_1$. Assuming $\d_1$ even, if $k$ is even (odd) then only even (odd) values of $\d_2$ are possible. For $\d_1$ odd, if $k$ is even (odd) then only odd (even) values of $\d_2$ are possible.

Two examples of a generalized Dyck path are in Fig.~\ref{fig:DyckExamples}. Note that $\euD(k,\d)\equiv\euD(k,0,\d)$ and from now on whenever we say a Dyck path we will mean a generalized Dyck path\footnote{Sometimes, a Dyck path is called generalized if two steps are allowed -- not necessarily being $U=(1,1)$ and $D=(1,-1)$~\cite{pemantle2013analytic,labelle1990generalized}. This is not the kind of generalization considered in this paper.}.
\begin{figure}[t]
    \centering
  \includegraphics{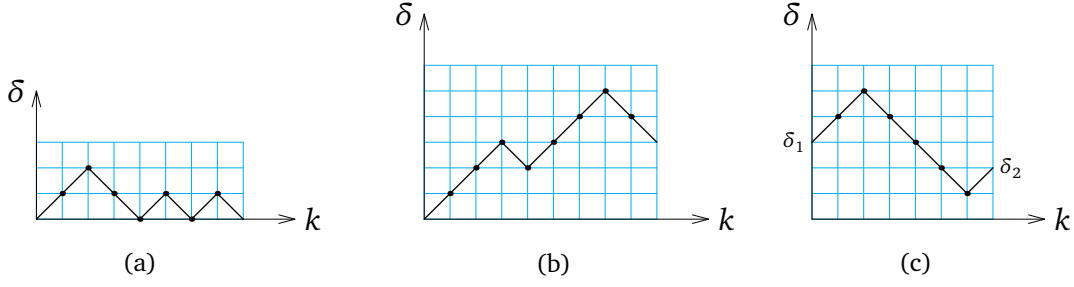}
   \caption{We plot (a) Dyck path $\euD(8,0,0)$ (b) more general Dyck path $\euD(10,0,3)$ and (c) the generalized Dyck path $\euD(7,3,2)$. Dyck paths are often described as mountain ranges.}
   \label{fig:DyckExamples}
\end{figure}
A Dyck path is unambiguously defined by its starting point $\d_1$ and a set of instructions called a Dyck word that leads the path through the lattice. It is a string of the segments $U$ and $D$ whose length is $k$ and our convention will be to read the string from the right. The highest Dyck path $\hiD(k,\d_1,\d_2)$ is defined by the (highest) Dyck word, where all the $U$ segments are on the right:
$$
\overline{W}=\underbrace{D\dots\dots D}_{k+\d_1-\d_2\over2}\underbrace{U\dots\dots U}_{k-\d_1+\d_2\over2}.
$$
Then, all Dyck paths for the fixed parameters $k,\d_1$ and $\d_2$  lie ``between'' the highest Dyck path and the $x$ axis. Equivalently, the lowest Dyck path $\loD(k,\d_1,\d_2)$ is defined such that there is no other Dyck path between this path and the $x$ axis for the chosen set of parameters. It is generated by the lowest Dyck word
$$
\underline{\smash W}=\underbrace{U\dots\dots U}_{\d_2}\underbrace{DU\dots\dots DU}_{k-\d_1-\d_2}\underbrace{D\dots\dots D}_{\d_1}.
$$
Any Dyck path with the given parameters $k,\d_1,\d_2$ contains the same number of $U$ and $D$ segments.

The number of Dyck paths  $\euD(k,0,0)$  is famously counted by the Catalan number $C_{k/2}$
\begin{equation}\label{eq:Catalan}
  |\euD(k,0,0)|\equiv C_{k/2}={2\over2+k}\binom{k}{{k\over2}}.
\end{equation}
Catalan numbers are ubiquitous in computer science and discrete mathematics~\cite{koshy2008catalan} and this makes Dyck paths  a well studied object appearing in many reincarnations~\cite{stanley1999enumerative}\footnote{A huge collection of examples is available  at \url{http://www-math.mit.edu/~rstan/ec} as an addendum to~\cite{stanley1999enumerative}.}. One of them is Bertrand's ballot problem and by virtue of this insight one finds~\cite{feller1967} the number of Dyck paths $\euD(k,\d_1,\d_2)$ to be
\begin{equation}\label{eq:mostGeneralDyckCounting}
  |\euD(k,\d_1,\d_2)|\equiv G(k,\d_1,\d_2)=\binom{k}{{1\over2}(k+\d_2-\d_1)}-\binom{k}{{1\over2}(k-\d_2-\d_1-2)},
\end{equation}
where the binomial coefficient is defined to be zero whenever the ``denominator'' is negative.
\begin{figure}[t]
\centering
  \includegraphics{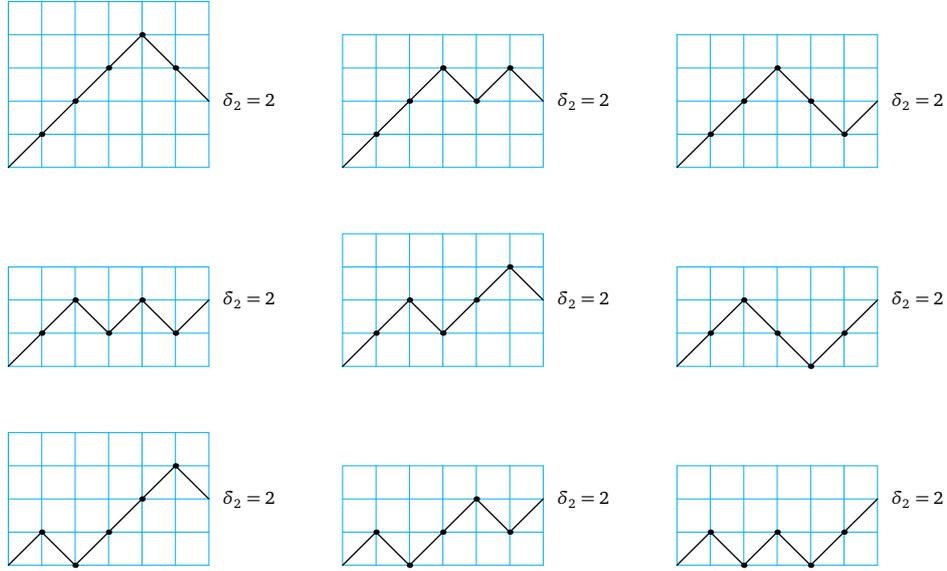}
   \caption{All $G(6,0,2)=9$ Dyck paths of length $k=6$ starting at $\d_1=0$ and ending at $\d_2=2$ are depicted.}
   \label{fig:AllDycks}
\end{figure}

As an illustration we plotted all Dyck paths $\euD(6,0,2)$ in Fig.~\ref{fig:AllDycks} for which the expression from Eq.~(\ref{eq:mostGeneralDyckCounting}) simplifies to
\begin{equation}\label{eq:moreGeneralDyckCounting}
  G(k,0,\d_2)=\frac{\d_2+1 }{\frac{\d_2+k}{2}+1}\binom{k}{\frac{\d_2+k}{2}}.
\end{equation}
Note that if $k=\d_1+\d_2$ in (\ref{eq:mostGeneralDyckCounting}), all the Dyck paths $\euD(k,k-\d_2,\d_2)$ are confined to a rectangle whose two corners are touching the positive $x$ and $y$ axes and
$$
G(k,k-\d_2,\d_2)=\binom{k}{\d_2}
$$
is a binomial number. This is nothing else than a binomial walk on the Pascal triangle (rectangle) experimentally realized by the Galton board.

\subsection{Dyck paths and bosons}
\label{sec:DyckBosons}

Let $A$  be a general multimode boson operator such that there exists $\psi$ satisfying~$A\ket{\mathrm{vac}}=0$. So we may call $\ket{\mathrm{vac}}$ a vacuum or ground  state. The object of our study is the expression  $\Adg\pm A$ that can be considered as a hypothetical Hamiltonian\footnote{The minus version needed to evaluate~(\ref{eq:generalEvolutionOp}) is recovered by putting a minus to all products with an odd number of $A$'s. The appearance of the plus sign is irrelevant for the presented method.}. The operator $\Adg$ acts  repeatedly on the ground state and generates a finite or eventually semi-infinite tower of states: $\Adg^p\ket{\mathrm{vac}}=\widetilde\psi^{(p)}$, where $\ket{\mathrm{vac}}\equiv\psi^{(0)}$ and $\widetilde\psi^{(p)}$ is, in general, an unnormalized state. The first crucial question is under what conditions the adjoint operator $A$  behaves as  an annihilation operator:
\begin{subequations}\label{eq:LadderOps}
\begin{align}
    \Adg\psi^{(p-1)}&=\la_p\psi^{(p)},\\
    A\psi^{(p)}&=\mu_p\psi^{(p-1)}.
\end{align}
\end{subequations}
We adopted the convention that the states with no tildes are normalized and so the coefficients $\la_p$ and $\mu_{p}$  are determined by this condition. The second crucial point we have to address in order for our method to work is that our approach hinges on the ability to analytically sum over the product $\la_{p}\mu_{p}$ later derived  in Sec.~\ref{sec:DyckEval}  and used in Eqs.~(\ref{eq:mainSum1}) and~(\ref{eq:mainSum2}) (appearing as $\la_{m_i}\mu_{m_i}$). First, we will show that $\la_p\mu_p$ can be obtained even without knowing the normalization of $\psi^{(p)}$ (note that we are not interested in $\la_p$ and $\mu_p$ but only in their product). Finding the normalization may not only be rather laborious  for a generic boson expression $A$ and all $p$ but we even cannot guarantee the existence of a closed, or even simple, form. So assume for a moment that $A,\Adg$ behave as proper ladder operators as in (\ref{eq:LadderOps}). Then we may write
\begin{align}
    A\,\Adg\psi^{(p-1)}&=\la_pA\psi^{(p)}=\la_p\mu_p\psi^{(p-1)}.
\end{align}
But this clearly does not depend on whether $\psi^{(p-1)}$ is normalized or not. It holds that $\psi^{(p-1)}=\mathsf{N}_{p-1}\widetilde\psi^{(p-1)}$, where $\Adg^{p-1}\ket{\mathrm{vac}}=\widetilde\psi^{(p-1)}$ and $\mathsf{N}_{p-1}$ is the normalization of $\widetilde\psi^{(p-1)}$. Hence we get the sought product $\la_{p}\mu_{p}$ from the unnormalized ladder states for all $p$
\begin{align}\label{eq:LadderOpsUnnormalized}
    A\,\Adg\widetilde\psi^{(p-1)}&=\la_p\mu_p\widetilde\psi^{(p-1)}
\end{align}
greatly simplifying its derivation.

Let's now uncover a large class of the boson expressions $A,\Adg$ that behaves as ladder operators. Remarkably, we will show that this is equivalent to the condition that  the product $\la_{p}\mu_{p}$ is an integer and therefore leads to the announced speed-up derived in  Sec.~\ref{sec:DyckEval}.
\begin{defi}
  A normally ordered monomial of the length $d$ is the expression
  \begin{equation}\label{eq:monomial}
    M^{(\Ncal)}={a_0^\dg}^{k_0}{a_0^{\ell_0}}{a_1^\dg}^{k_1}{a_1^{\ell_1}}\hdots {a_d^\dg}^{k_d}{a_d^{\ell_d}}.
  \end{equation}
  We define a normal \emph{disjoint monomial of the same order} as any normally ordered monomial that is created by taking the creation and annihilation exponents $(k_0,k_1,\dots,k_d)$ and  $(\ell_0,\ell_1,\dots,\ell_d)$ and (independently) permuting them such that no mode with a nonzero exponent in $M^{(\Ncal)}$ is present in the newly created one.
\end{defi}
The definition seems restrictive but the opposite is true. It allows for a large number of multimode boson expressions that we will appreciate in the example after the following statement.
\begin{prop}
Let $A$ be a sum of disjoint multimode boson monomials of the same order and $\ket{\mathrm{vac}}$ a vacuum state  such that $A\ket{\mathrm{vac}}=0$. Then the operator $A\,\Adg$ is proportional to an identity map for all ladder states $\widetilde\psi^{(p-1)}\df \Adg^{p-1}\ket{\mathrm{vac}}$, that is, $A\,\Adg\widetilde\psi^{(p-1)}=q_{p-1}\widetilde\psi^{(p-1)}$. Moreover, the constant of proportionality $q_{p-1}$ is an integer.
\end{prop}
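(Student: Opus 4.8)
The plan is to prove the assertion by induction on the ladder index, concentrating the difficulty into a single commutator evaluated on one ladder state. Write $A=\sum_{i}M_i$ with the $M_i$ pairwise disjoint monomials of the common order, carrying mode sets $S_i$. Disjointness gives $[M_i,M_j]=[M_i^\dg,M_j^\dg]=[M_i,M_j^\dg]=0$ whenever $i\neq j$, the vacuum factorizes over modes, and $A\ket{\mathrm{vac}}=0$ forces $M_i\ket{\mathrm{vac}}=0$ for every $i$ (the nonvanishing $M_i\ket{\mathrm{vac}}$ are Fock states on disjoint mode sets, hence linearly independent and unable to cancel). If $\Adg\ket{\mathrm{vac}}=0$ the tower above the vacuum collapses and the statement is immediate, so assume $\Adg\ket{\mathrm{vac}}\neq0$; then, since the monomials share the same order, each $M_i$ must be a product of annihilation operators, $M_i=\prod_{\ell\in S_i}a_\ell^{e_{i,\ell}}$, and on $\widetilde\psi^{(m)}\df\Adg^{m}\ket{\mathrm{vac}}$ each $M_i^\dg$ acts purely by raising occupation numbers within $S_i$. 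Two elementary facts carry the local algebra: since every cross-commutator vanishes, $[A,\Adg]=\sum_i[M_i,M_i^\dg]$; and for a single mode, writing $N=a^\dg a$, the canonical relations give $a^{k}{a^\dg}^{n}\ket0=\tfrac{n!}{(n-k)!}{a^\dg}^{n-k}\ket0$ for $n\geq k$, and $a^{k}{a^\dg}^{k}=\prod_{j=1}^{k}(N+j)$, ${a^\dg}^{k}a^{k}=\prod_{j=0}^{k-1}(N-j)$. Normal ordering mode by mode then writes each $[M_i,M_i^\dg]$ as a polynomial $P_i$, with integer coefficients, in the number operators $\{N_\ell\}_{\ell\in S_i}$; because all the $M_i$ carry the same exponent data, $P_i$ is one fixed polynomial transported to the respective $S_i$, and $[A,\Adg]$ is diagonal in the Fock basis.

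The key lemma is that $[A,\Adg]\,\widetilde\psi^{(m)}=r_m\,\widetilde\psi^{(m)}$ for an integer $r_m$. Expand $\Adg^{m}=\big(\sum_i M_i^\dg\big)^m$ by the multinomial theorem, legitimate exactly because the $M_i^\dg$ commute, so $\widetilde\psi^{(m)}=\sum_{\mathbf p\,:\,|\mathbf p|=m}\binom{m}{\mathbf p}\prod_i(M_i^\dg)^{p_i}\ket{\mathrm{vac}}$, each summand a scalar multiple of a product Fock state whose occupation numbers in $S_i$ are fixed multiples of $p_i$. Hence $[M_i,M_i^\dg]=P_i$ acts on that summand as one scalar $\widetilde P(p_i)$ depending only on $p_i$ — the same function $\widetilde P$ for every $i$ — and, $[A,\Adg]$ being diagonal, it multiplies the $\mathbf p$-summand by $\sum_i\widetilde P(p_i)$. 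The lemma therefore reduces to the combinatorial identity that $\sum_i\widetilde P(p_i)$ is independent of $\mathbf p$ on the simplex $\{\mathbf p\geq0:\ |\mathbf p|=m\}$. This is the main obstacle, and it is precisely where the fine structure of ``disjoint monomials of the same order'' must be invoked — it is the combinatorial content that, in the language of this paper, the generalized-Dyck-path counting of Sec.~\ref{sec:DyckCount}--\ref{sec:DyckBosons} is built to encode; I would settle it by computing $\widetilde P$ explicitly from the product formulas above and checking the cancellation of the $\mathbf p$-dependence. Its common value is $r_m$, which by those formulas is a finite sum of differences of products of consecutive integers, hence an integer.

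Granting the lemma, the Proposition follows by induction on $m\geq1$ of the statement ``$A\,\widetilde\psi^{(m)}=q_{m-1}\widetilde\psi^{(m-1)}$ with $q_{m-1}\in\bbZ$,'' which is equivalent to the asserted $A\Adg\widetilde\psi^{(m-1)}=q_{m-1}\widetilde\psi^{(m-1)}$ because $\widetilde\psi^{(m)}=\Adg\widetilde\psi^{(m-1)}$. For $m=1$, $A\widetilde\psi^{(1)}=A\Adg\ket{\mathrm{vac}}=[A,\Adg]\ket{\mathrm{vac}}$ by $A\ket{\mathrm{vac}}=0$, and since $[A,\Adg]$ is diagonal this equals $q_0\ket{\mathrm{vac}}$ with $q_0=\sum_i\prod_{\ell\in S_i}e_{i,\ell}!$ its eigenvalue at zero occupations, an integer. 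For the step, write $A\Adg=\Adg A+[A,\Adg]$ and apply to $\widetilde\psi^{(m)}$: the first term yields $\Adg\big(A\widetilde\psi^{(m)}\big)=q_{m-1}\Adg\widetilde\psi^{(m-1)}=q_{m-1}\widetilde\psi^{(m)}$ by the inductive hypothesis, the second yields $r_m\widetilde\psi^{(m)}$ by the lemma, so $A\widetilde\psi^{(m+1)}=A\Adg\widetilde\psi^{(m)}=(q_{m-1}+r_m)\widetilde\psi^{(m)}$ and $q_m\df q_{m-1}+r_m$ is again an integer. This closes the induction and gives $A\Adg\widetilde\psi^{(p-1)}=q_{p-1}\widetilde\psi^{(p-1)}$ with $q_{p-1}\in\bbZ$ for all $p\geq1$. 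The only genuinely delicate step, as against routine canonical-commutation algebra and multinomial bookkeeping, is the $\mathbf p$-independence of $\sum_i\widetilde P(p_i)$.
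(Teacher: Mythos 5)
Your architecture---split $A\Adg=\Adg A+[A,\Adg]$, show the commutator acts as a scalar $r_m$ on each rung $\widetilde\psi^{(m)}$, and induct---is essentially the paper's own telescoping argument (Eqs.~(\ref{eq:commutantis1})--(\ref{eq:final})) repackaged, so everything rides on the one step you defer. Before that, though, your reduction to pure annihilation monomials is already a genuine restriction of scope: the vacuum in the Proposition is any state with $A\ket{\mathrm{vac}}=0$, not the Minkowski vacuum, and the deduction ``$\Adg\ket{\mathrm{vac}}\neq0$ plus equal order forces $M_i=\prod_\ell a_\ell^{e_{i,\ell}}$'' fails for in-scope cases such as $A=\aadg{0}a_2+\aadg{1}a_3$ with $\ket{\mathrm{vac}}=\ket{n}_0\ket{n}_1\ket{0}_2\ket{0}_3$ (Example~\ref{exa:Calculable}) or $\Adg=\aadg{0}a_1$ with $\ket{\mathrm{vac}}=\ket{0}_0\ket{n}_1$ (Example~\ref{exa:splitter}). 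As written, your proof does not cover the operators the statement is about.

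The more serious problem is the key lemma itself. The $\mathbf{p}$-independence of $\sum_i\widetilde P(p_i)$ on the simplex $|\mathbf{p}|=m$ is not a bookkeeping detail to be ``settled by computing $\widetilde P$ explicitly'': it holds only when $\widetilde P$ is affine, whereas $\widetilde P$ has degree $d-1$ in the occupation number for a length-$d$ single-mode monomial. Concretely, for $A=a_1^3+a_2^3$ one has $[a_i^3,{a_i^\dg}^3]=9n_i^2+9n_i+6$, and on the components $\ket{6,0}$, $\ket{3,3}$, $\ket{0,6}$ of $\widetilde\psi^{(2)}$ the commutator has eigenvalues $390$, $228$, $390$---not a scalar; the dependence on $\sum_ip_i^2$ does not cancel. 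A direct expansion gives $A\Adg\widetilde\psi^{(2)}$ with component ratios $522$ on $\ket{6,0}$ versus $360$ on $\ket{3,3}$ relative to $\widetilde\psi^{(2)}$, so the failure is not an artifact of your decomposition: it sits exactly where the paper's own step $g(n_i)\Adg=\Adg g^{(1)}(n_i)$ below Eq.~(\ref{eq:AAdg}) is asserted rather than proved, since $f(n_1+3)+f(n_2)\neq f(n_1)+f(n_2+3)$ for quadratic $f$. Your induction does close for a single monomial $A$ (each rung is then a single Fock state) and for disjoint sums whose commutator polynomials are affine in the $n_i$, as in Example~\ref{exa:Calculable} where $f(n_I)=n_0-n_2$; but the ``main obstacle'' you flag is a genuine obstruction, not a deferrable computation, and the proposal cannot be completed for the full class in the statement.
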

\begin{rem}
   Note that a vacuum (or ground state) is in general not unique. A given operator $A$ can annihilate several distinct vacui.
\end{rem}
\begin{exa}
   The boson expressions $A$ can now be constructed in a great variety. The trivial example is every monomial $M$ (arbitrarily ordered). To illustrate the definition we further restrict to normal monomials $M^{(\Ncal)}$ but later we will comment on differently ordered alternatives. If we fix the monomial order $d$ then the first nontrivial example is $M^{(\Ncal)}={a_0^\dg}^{k_0}$ for any $k_0$. The disjoint monomials of the same order are all $M^{(\Ncal)}={a_i^\dg}^{k_0}$, where $1\leq i\leq d$ and so the boson expression $A$ can be any sum
    \begin{equation}\label{eq:disjointsum1}
       A=\sum_{i=0}^{\leq d}{a_i^\dg}^{k_0}.
    \end{equation}
   Another example is  $M^{(\Ncal)}=a^\dg_0a_0{a_1^\dg}^{3}$. A sum of disjoint multimode boson monomials of the same order can be, for instance,
    \begin{equation}\label{eq:disjointsum2}
       A=a^\dg_0a_0{a_1^\dg}^{3}+a^\dg_2a_2{a_3^\dg}^{3}+a^\dg_4a_4{a_5^\dg}^{3}.
    \end{equation}
\end{exa}
To prove the proposition we will need an auxiliary lemma.
\begin{lem}
  Let $M_1$ and $M_2$ be  boson multimode monomials such that $M_2$ is an arbitrarily reordered version of $M_1$. Then
    \begin{equation}\label{eq:M1M2}
      M_1=M_2+f(n_i),
    \end{equation}
  where $f(n_i)$ is a multivariate polynomial with the variables being the $i$-th mode boson numbers $n_i=a_i^\dg a_i$ for $0\leq i\leq d$.
\end{lem}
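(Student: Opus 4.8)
The plan is to reduce the general reordering to a composition of elementary transpositions of adjacent letters, and to show each such swap changes the monomial only by a polynomial in the number operators $n_i = a_i^\dg a_i$, and that this class of ``error polynomials'' is stable under the operations we need (left/right multiplication by letters, addition). First I would observe that any reordering of the word $M_1$ into $M_2$ is realized by a finite sequence of swaps of two adjacent letters; since different modes commute outright, the only nontrivial swaps are $a_i a_i^\dg \leftrightarrow a_i^\dg a_i + \id$ within a single mode $i$. So it suffices to prove the claim for a single adjacent swap and then argue it propagates.

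For a single swap, write $M_1 = L\, a_i a_i^\dg\, R$ and $M_2 = L\, a_i^\dg a_i\, R$ for some subwords $L, R$. Using $[a_i, a_i^\dg] = \id$ we get $M_1 = M_2 + L R$, so the error term is the monomial $LR$ obtained by deleting the swapped pair. This is not yet of the desired form $f(n_i)$, so the second step is an induction on word length: I would show that any boson monomial, after being brought to normal order, equals a normally ordered monomial plus a polynomial in the $n_i$, and more relevantly that the \emph{difference} of two reorderings of the \emph{same} monomial always lies in the span of products of number operators. The cleanest route is induction on the length $d$ (or on the total number of letters): strip the rightmost letter $x$ of both $M_1$ and $M_2$, apply the inductive hypothesis to the shorter words (after commuting $x$ to a common position, which by the base case introduces only a shorter monomial as a correction), and then check that right-multiplication by $x \in \{a_j, a_j^\dg\}$ sends a polynomial in the $n_i$ to $a_j$-weighted, $a_j^\dg$-weighted, or unchanged expressions that can again be rewritten, via $a_j g(n_j) = g(n_j - 1) a_j$ and $a_j^\dg g(n_j) = g(n_j+1) a_j^\dg$ together with $a_j^\dg a_j = n_j$, as (monomial)$\,+\,$(polynomial in $n_i$). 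Crucially, because $M_1$ and $M_2$ contain the \emph{same multiset} of letters in each mode, their leading (fully contracted-free) parts coincide, so only the lower-order correction survives and it is purely a function of the $n_i$.

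I expect the main obstacle to be the bookkeeping in the inductive step: making precise that ``the leading monomial parts of $M_1$ and $M_2$ agree'' so that their difference is genuinely of strictly lower degree in the creation/annihilation letters, and then that this lower-degree remainder, having equal numbers of $a_i$ and $a_i^\dg$ for each $i$ (which forces it to be expressible in the $n_i$ alone). One must be careful that intermediate steps can produce terms like $a_i^\dg{}^r a_i^r$ which equal $n_i(n_i-1)\cdots(n_i-r+1)$ — a polynomial in $n_i$ — so closing the induction requires recording this normal-ordering identity for single-mode ``balanced'' monomials and invoking it termwise. Once that is in place, additivity of the correction over the sequence of adjacent swaps gives $M_1 = M_2 + f(n_i)$ with $f$ a multivariate polynomial as claimed.
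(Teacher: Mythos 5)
Your proposal follows essentially the same route as the paper's proof, with the bookkeeping made explicit: reduce the reordering to adjacent swaps $a_ia_i^\dg\to a_i^\dg a_i+\id$, observe that each swap produces a correction monomial with one fewer $a_i$ and one fewer $a_i^\dg$, and finish by normal-ordering the corrections into expressions ${a_i^\dg}^pa_i^p=n_i(n_i-1)\cdots(n_i-p+1)$. The paper compresses all of this into a few lines, so your version is, if anything, the more careful one.

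However, the step you yourself flag as the main obstacle is the only place where something can actually fail, and you assert it rather than prove it: namely that the remainder has equal numbers of $a_i$ and $a_i^\dg$ in every mode. For the lemma as literally stated this is \emph{false}: take $M_1=aa^\dg a^\dg$ and $M_2=a^\dg a^\dg a$; then $M_1=M_2+2a^\dg$, and $2a^\dg$ is not a polynomial in $n$. The conclusion holds only under the implicit hypothesis that $M_1$ (hence $M_2$) is \emph{balanced}, i.e.\ contains equally many $a_i$ and $a_i^\dg$ for each mode $i$ --- which is exactly the situation in the lemma's only application, $M_1=A_I\Adg_I$ versus $M_2=\Adg_IA_I$. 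With that hypothesis stated, your induction closes cleanly: each adjacent swap deletes one $a_i$ and one $a_i^\dg$ and therefore preserves the per-mode excess of creations over annihilations, so every correction term of a balanced word is again balanced; a balanced multimode word factorizes over modes (distinct modes commute) into single-mode balanced words, and each of those normal-orders to a sum of terms ${a_i^\dg}^pa_i^p$, i.e.\ to a polynomial in $n_i$. The paper's proof makes the same unqualified claim (``all summands of $f$ must contain an equal number of creation and annihilation operators''), so the gap is shared; the fix is simply to add the balancedness hypothesis and the one-line observation that it is preserved by each swap.
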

\begin{proof}
   Any monomial $M_1$ (normal ordered or not) can always be recast to the form of $M_2$ resulting in Eq.~(\ref{eq:M1M2}). Then, all summands of the (non-unique) polynomial $f(n_i)$ must contain an equal number  of creation and annihilation operators for all modes involved in $M_1$ and $M_2$. If it was not the case then the lemma assumption about $M_1$ and $M_2$ would be violated. The polynomial $f(n_i)$ can be put (by normal ordering) to a form  containing only the expressions ${a_i^\dg}^pa_i^p$  and every such building block can be rewritten as a sum of powers of $n_i$.
\end{proof}
\begin{proof}[Proof of proposition]
  We will be concerned with the sums of normally ordered disjoint multimode boson monomials denoted $A$ and $\Adg$ and their corresponding products $A\,\Adg$ and $\Adg A$. Each monomial in $A\,\Adg$ has a reordered counterpart with the same number of creation and annihilation operators in $\Adg A$. According to the previous lemma, this implies
    \begin{equation}\label{eq:AAdg}
      A\,\Adg=\Adg A+g(n_i),
    \end{equation}
  where $g(n_i)$ is a multivariate polynomial with the variables being the boson number operators $n_i$. Furthermore, using the distributivity of the commutator 
  and the crucial fact that
  $$
  A=\sum_IA_I
  $$
  and its adjoint $\Adg$ are disjoint ($I$ is a disjoint multi-index, see the examples in Eqs.~(\ref{eq:disjointsum1}) and (\ref{eq:disjointsum2})), we may write
  $$
  g(n_i)=\sum_If(n_I),
  $$
  where $[A_I,\Adg_I]=f(n_I)$. This is the only surviving term in commutator~Eq.~(\ref{eq:AAdg}). To use this fact, we write
\begin{subequations}\label{eq:commutantis1}
    \begin{align}
       A\,\Adg\psi^{(p-1)}&=A\,\Adg\,{{A^\dg}}^{p-1}\ket{\mathrm{vac}} \\
       & = \left(\Adg A+g(n_i)\right){{A^\dg}}^{p-1}\ket{\mathrm{vac}}\\
       & = \left(\Adg A\,{{A^\dg}}^{p-1}+{{A^\dg}}^{p-1}g^{(p-1)}(n_i)\right)\ket{\mathrm{vac}},
    \end{align}
\end{subequations}
where in the second summand of the third row we used the following consequence of the boson commutation rule:
\begin{align}\label{eq:ncommuting}
  n_i{a_j^\dg}^p & = {a_j^\dg}^p(n_i+\d_{ij}p), \\
  n_ia_j^p & = a_j^p(n_i-\d_{ij}p).
\end{align}
So we ``propagated'' the polynomial $g(n_i)$ to the right side of ${A^\dg}^{p-1}$ converting it to another polynomial $g^{(p-1)}(n_i)$, where only the boson number operators appear as variables. Recall that $\Adg$ is a sum of disjoint monomials and so all but one $f(n_I)$ commutes and the one that does not commute switches the side according to the rules~Eqs.~(\ref{eq:ncommuting}) in the same way for all $f(n_I)$. This results in $g(n_i)\Adg=\Adg g^{(1)}(n_i)$ implying $g(n_i){A^\dg}^{p-1}={A^\dg}^{p-1}g^{(p-1)}(n_i)$.  We continue by writing
\begin{subequations}\label{eq:commutantis2}
    \begin{align}
      \Adg A\,{A^\dg}^{p-1} & = \Adg \left(\Adg A+g(n_i)\right) {A^\dg}^{p-2}\\
      & = {A^\dg}^{2}\,A\,{A^\dg}^{p-2}+{A^\dg}^{p-1}g^{(p-2)}(n_i)
    \end{align}
\end{subequations}
until we arrive at
\begin{equation}\label{eq:final}
  A\,\Adg\widetilde\psi^{(p-1)}=\Big[{A^\dg}^{p}A+{A^\dg}^{p-1}\sum_{s=1}^{p-1}g^{(p-s)}(n_i)\Big]\ket{\mathrm{vac}}
  =q_{p-1}\widetilde\psi^{(p-1)},
\end{equation}
where we used $A\ket{\mathrm{vac}}=0\Rightarrow{A^\dg}^{p}A\ket{\mathrm{vac}}=0$ and the fact that  only the number coefficients of $g^{(p-s)}(n_i)$ survive the encounter with $\ket{\mathrm{vac}}$ generating $q_p\in\bbZ$.
\end{proof}
\begin{rem}
  The $A$ expressions are normally ordered but we might  have equally used antinormally ordered monomials~\cite{louisell1973quantum,cahill1969ordered} or arbitrarily ordered ones and prove the proposition for them. It would lead to an ever larger class of operators satisfying the proposition.
\end{rem}
\begin{rem}
  The class of operators uncovered in the proposition is certainly not the most general class of boson expressions with this kind of ladder-like behavior. Yet, among this class, the monomials represent a huge variety of hypothetical Hamiltonians and so we will mainly focus on them.
\end{rem}
\begin{exa}\label{exa:splitter}
    As our first example, consider $\Adg=a_0^\dg a_1$. The ground state is $\psi^{(0)}=\ket{0}_0\ket{n}_{1}$ since $A\psi^{(0)}=0$ for all $n$ and  $\Adg$ (acting repeatedly on $\psi^{(0)}$) generates a finite tower of states $\psi^{(p)}$ for $0\leq p\leq n$. The corresponding Hamiltonian $H_{BS}=r(a_0^\dg a_1-a_0a_1^\dg)$ is incidentally an important one. It represents one of the generators of the $su(2)$ algebra  describing the action of a quantum-optical beam-splitter and the fact that the ground state is somewhat non-unique turns out to be important too. The $l=n+1$ states generated by $\Adg^p\ket{0}_0\ket{n}_{1}$ carry the $l$-th representation of $su(2)$~\cite{yurke1986}. The claims from the theorem are manifestly satisfied following the standard properties of the boson operators~\cite{louisell1973quantum}
    \begin{subequations}\label{eq:bosonSteps}
      \begin{align}
        \adg^m\ket{n} & =\sqrt{(n+m)!\over n!}\ket{n+m}, \\
        a^m\ket{n}   & =\sqrt{n!\over (n-m)!}\ket{n-m}.
      \end{align}
    \end{subequations}
\end{exa}
\begin{exa}\label{exa:contrived}
  The next example will be $A$ in the form of the following disjoint sum:
\begin{equation}\label{eq:sumA}
  A=\aadg{0}a_2a_3a_4+\aadg{1}a_5a_6a_7.
\end{equation}
Assume the ground state of $A$ to be $\psi^{(0)}=\ket{n}_0\ket{n}_{1}\ket{0}_{2-7}$. We find
$$
\Adg\psi^{(0)}=\sqrt{n}
\big(\ket{n-1}_0\ket{n}_{1}\ket{1}^{\otimes3}_{234}\ket{0}_{567}^{\otimes3}+\ket{n}_0\ket{n-1}_{1}\ket{0}_{234}^{\otimes3}\ket{1}_{567}^{\otimes3}\big)
=\sqrt{2n}\psi^{(1)}=\la_1\psi^{(1)}.
$$
We can go on and derive a general expression for $\la_p$ and $\mu_p$.
\end{exa}
Since the state on the right of the previous equation is normalized, the $\la_1$ contains a square root and $\la_p$ could be quite difficult to find (and unnecessary since we need $\la_p\mu_p$ in the end). So in the next, morally similar, example we derive it using our insight from Eq.~(\ref{eq:LadderOpsUnnormalized}).
\begin{exa}\label{exa:Calculable}
  Let
  $$
  A=\aadg{0}a_2+\aadg{1}a_3
  $$
  and $\psi^{(0)}=\ket{n}_0\ket{n}_{1}\ket{0}_{2}\ket{0}_3$. We find
    \begin{align}
      \Adg^{p-1}\ket{\mathrm{vac}}&=\widetilde\psi^{(p-1)}\nn \\
       & =(p-1)!\sum_{m=0}^{p-1}\sqrt{\binom{n}{m}}\sqrt{\binom{n}{p-1-m}}\ket{n-m}_0\ket{n-p+1+m}_{1}\ket{m}_{2}\ket{p-1-m}_3.
    \end{align}
  By matching the coefficients of the corresponding basis vectors, the calculation
  $$
  A\,\Adg\widetilde\psi^{(p-1)}=\la_p\mu_p\widetilde\psi^{(p-1)}=p(2n-p+1)\widetilde\psi^{(p-1)}
  $$
  reveals the constant of proportionality. We can verify it by a direct calculation for $p=2$:
    \begin{align}
      A\,\Adg\widetilde\psi^{(1)}&=(\aadg{0}a_2+\aadg{1}a_3)(a_0\aadg{2}+a_1\aadg{3})
      \big(\sqrt{n}\ket{n}_0\ket{n-1}_{1}\ket{0}_{2}\ket{1}_3+\sqrt{n}\ket{n-1}_0\ket{n}_{1}\ket{1}_{2}\ket{0}_3\big)\nn\\
       & = (4n-2)\big(\sqrt{n}\ket{n}_0\ket{n-1}_{1}\ket{0}_{2}\ket{1}_3+\sqrt{n}\ket{n-1}_0\ket{n}_{1}\ket{1}_{2}\ket{0}_3\big)\nn\\
       & \equiv p(2n-p+1)\big|_{p=2}\psi^{(1)}.\nn
    \end{align}
\end{exa}

Convinced about the validity of the proposition we can set out to show its main application. The sum $(\Adg+A)^k$ contains $2^k$ summands but when acting on $\psi$, some of them become zero. What is the exact condition for a general summand of $(\Adg+A)^k$ to survive its encounter with the ground state? To find out, we write
\begin{equation}\label{eq:Expsummand}
  (\Adg+A)^k=\sum_{m=0}^k\sum^{\binom{k}{m}}_{|S|=1}(A)^{m_1}(\Adg)^{m_2}\dots(A)^{m_{2J-1}}(\Adg)^{m_{2J}},
\end{equation}
where
\begin{equation}\label{eq:set}
S\df\Bigg\{m_1,\dots,m_{2J};m_j\geq0,\sum_{j=1}^{2J}m_j=k,\sum_{j=1}^Jm_{2j}=m\Bigg\}.
\end{equation}
At least one half of all the products on the RHS of Eq.~(\ref{eq:Expsummand}) vanishes upon acting on $\psi$ because the number of $A$ operators outnumber their conjugates. This is a necessary but not sufficient condition to eliminate all the products that do not contribute. For example, if there is just one $A$ operator that is, however, the first one to encounter $\psi$, the whole summand disappears. A general product in Eq.~(\ref{eq:Expsummand}) is a bosonic word of the length $k$ according to $(\ref{eq:set})$
\begin{equation}\label{eq:bosonicWord}
(A)^{m_1}(\Adg)^{m_2}\dots(A)^{m_{2J-1}}(\Adg)^{m_{2J}}.
\end{equation}
From the way it acts on $\psi$ it can be seen that the word yields zero if and only if the number $m_{2i-1}$ of annihilation  operators $A$  on the $(2i-1)$-st position is greater than the number $\sum_{j=i}^{J} m_{2j}$ of {\em all}  creation operators  $\Adg$ to its right valid for {\em any} $1\leq i\leq J$. But this is precisely the defining property of a Dyck path $\euD(k,0,\d_2)$. We associate the step operators with the lattice steps
\begin{subequations}\label{eq:bijection}
\begin{align}
  A & \leftrightharpoons D=(1,-1), \\
  \Adg & \leftrightharpoons U=(1,1)
\end{align}
\end{subequations}
and a bosonic word is nothing else than a Dyck word of instructions of how a Dyck path is generated. Indeed, the constraint on a Dyck path to remain confined in the positive quadrant is equivalent to the condition for a bosonic word (product) in Eq.~(\ref{eq:Expsummand}) to be nonzero. The converse is true as well because of the way the annihilation operator acts on a ground state and so we have a bijection between the set of  Dyck paths $\euD(k,0,\d_2)$ of cardinality $G(k,0,\d_2)$ and the number of products on the RHS of
\begin{equation}\label{eq:ExpFinalExpansion}
(\Adg+A)^k\psi=\sum_{m=\lfloor k/2\rfloor}^k\sum_{|S|=1}^{G(k,0,\d_2)}(A)^{m_1}(\Adg)^{m_2}\dots(A)^{m_{2J-1}}(\Adg)^{m_{2J}}\psi,
\end{equation}
where
\begin{subequations}
  \begin{align}
    k & = \sum_{j=1}^Jm_{2j}+\sum_{j=1}^Jm_{2j-1},\\
    \d_2 & = \sum_{j=1}^Jm_{2j}-\sum_{j=1}^Jm_{2j-1}.
  \end{align}
\end{subequations}
The first line is extracted from the definition of $S$ (the length of the boson word) and the second line is the difference between the number of $A$ and $\Adg$ determining the $y$ component of the end point of the corresponding Dyck path. Note  the lower limit of the outer sum in Eq.~(\ref{eq:ExpFinalExpansion}). It removes the words with a  number of annihilation operators greater than the number of creation operators. They do not correspond to any Dyck path.

We may associate the $p$-th horizontal line of the integer lattice with the state  $\psi^{(p)}$. The $x$ axis is then the ground state $\psi$. With this insight in hand, we generalize the above construction for an arbitrary starting state $\psi^{(\d_1)}$, where we set $\d_1=p$:
\begin{equation}\label{eq:ExpFinalExpansionGeneral}
(\Adg+A)^k\psi^{(\d_1)}=\sum_{m=\lfloor k/2\rfloor-\lfloor\d_1\rfloor}^k\sum_{|S|=1}^{G(k,\d_1,\d_2)}(A)^{m_1}(\Adg)^{m_2}\dots(A)^{m_{2J-1}}(\Adg)^{m_{2J}}\psi^{(\d_1)}.
\end{equation}
Note that $\d_1$ can be greater than $k$ in which case the lower limit of the outer sum is zero.

To illustrate our technique, we will study in this paper the ground state case $\d_1=0$ corresponding to Eq.~(\ref{eq:ExpFinalExpansion}). The analysis for an arbitrary ladder state $\psi^{(\d_1)}$ can be easily generalized following the approach developed in the next section.

\subsection{Evaluating Dyck paths}
\label{sec:DyckEval}

In this section we present our main result. For that purpose we introduce a different kind of Dyck path description equivalent to a Dyck word~$W$.  Previously we introduced the highest and lowest Dyck paths, $\hiD(k,\d_1,\d_2)$ and $\loD(k,\d_1,\d_2)$, respectively. They are illustrated in~Fig.~\ref{fig:maxminDyck} for $k=14,\d_1=0$ and $\d_2=0,2$. The letter $u_i$ denotes the {\em ascending} path segment lying between the $(i-1)$-st and $i$-th horizontal line of the lattice. Similarly, $d_i$ denotes the {\em descending} segment connecting the $i$-th and $(i-1)$-st horizontal lattice line. At this point, we treat $u_i,d_i$ as letters of a different (somewhat less economical) kind of Dyck word we will label~$w$  which is, nonetheless, equivalent to the usual Dyck word $W$. A Dyck word $w$ is, like $W$, read from the right and we write down the highest Dyck path $\overline{W}$ from Fig.~\ref{fig:maxminDyck}~(a) as an example:
\begin{equation}\label{eq:DyckWords}
  \overline{W}=D^7U^7\equiv d_1d_2d_3d_4d_5d_6d_7u_7u_6u_5u_4u_3u_2u_1=\overline{w}.
\end{equation}
\begin{figure}[t]
    \centering
  \includegraphics{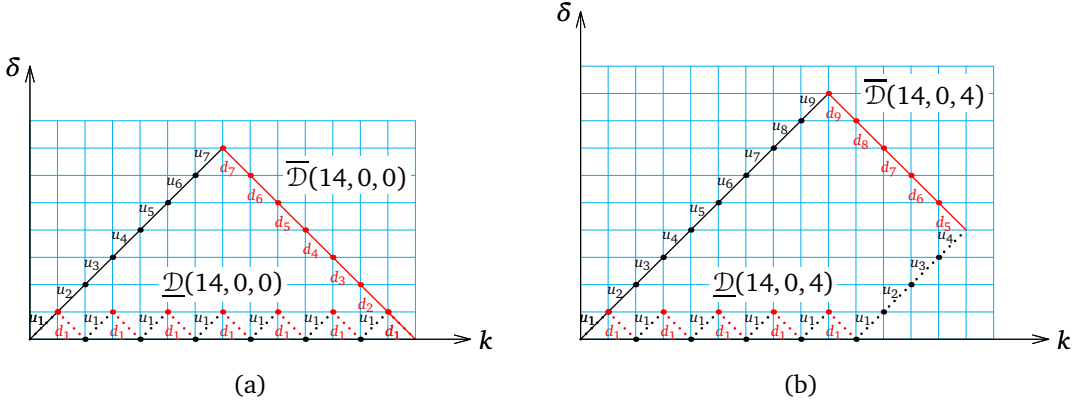}
   \caption{Two examples of the highest (solid)  and lowest (dotted) Dyck path for a given $k,\d_2$ and $\d_1=0$. Two different colors distinguish the ascending (black) and descending (red) sections of the paths. The  letters $u_i$ and $d_i$ form a Dyck word and their role is explained in Sec.~\ref{sec:DyckEval}.}
   \label{fig:maxminDyck}
\end{figure}
Dyck paths are continuous and so two consecutive segments of {\em any} Dyck path must be one of these four pairs (recall the right-to-left convention):
\begin{equation}\label{eq:GeneralDyckSegments}
\begin{aligned}
  & u_{i+1}u_i, \\
  & d_iu_i, \\
  & u_id_i, \\
  & d_id_{i+1}.
\end{aligned}
\end{equation}
It follows that if we have the highest Dyck path, where all $u_j$ are on the right, and there is a need to swap the leftmost letter $u_i$  with the rightmost letter $d_i$ such that the new word represents a Dyck path, the general rule is very simple:
\begin{equation}\label{eq:TheRule}
  d_iu_i\to u_{i-1}d_{i-1},
\end{equation}
valid for all admissible $i$. Only then the new word satisfies the conditions in~(\ref{eq:GeneralDyckSegments}) no matter what letter precedes $d_i$ or follows $u_i$ on the LHS of~(\ref{eq:TheRule}). 

The rule can be immediately put to work  by introducing a  way of how to generate all Dyck paths $\euD(k,\d_1,\d_2)$. We ``descend'' from  the highest Dyck path $\hiD(k,\d_1,\d_2)$ to the lowest Dyck path $\loD(k,0,0)$ by systematically swapping the highest ascending and descending segments $u_i$ and $d_i$ and generating all Dyck paths ending at $(k,\d_2)$ such that no Dyck path is left behind.

The rule is rather straightforward so let's illustrate it on the example $k=14$ depicted in Fig.~\ref{fig:maxminDyck}~(a). The first step is to swap the leftmost $U$ letter of the highest Dyck word $\overline{W}=D^7U^7$ with all the $D$ letters to its left except for the leftmost one. This generates the following set of words (the swapped letters are in bold to emphasize the transformation):
\begin{equation}\label{eq:FirstIter}
\begin{aligned}
   & D^6\Ubi\Dbi U^6 \\
   & D^5\Ubi\Dbi\Dbi U^6 \\
   & D^4\Ubi\Dbi\Dbi\Dbi U^6 \\
   & D^3\Ubi\Dbi\Dbi\Dbi\Dbi U^6 \\
   & D^2\Ubi\Dbi\Dbi\Dbi\Dbi\Dbi U^6 \\
   & D\Ubi\Dbi\Dbi\Dbi\Dbi\Dbi\Dbi U^6.
\end{aligned}
\end{equation}
Looking at the generated Dyck path in Fig.~\ref{fig:maxminDyck}~(a) we observe that all Dyck paths reaching the level~$u_6$ were obtained. Following the swap rule Eq.~(\ref{eq:TheRule}), the same operation using the Dyck path representation given by the $w$ word reads
\begin{equation}\label{eq:FirstIter2}
\begin{aligned}
   d_1d_2d_3d_4d_5d_6\ubi_6\dbi_6&u_6u_5u_4u_3u_2u_1\\
   d_1d_2d_3d_4d_5\ubi_5\dbi_5\dbi_6&u_6u_5u_4u_3u_2u_1 \\
   d_1d_2d_3d_4\ubi_4\dbi_4\dbi_5\dbi_6&u_6u_5u_4u_3u_2u_1 \\
   d_1d_2d_3\ubi_3\dbi_3\dbi_4\dbi_5\dbi_6&u_6u_5u_4u_3u_2u_1 \\
   d_1d_2\ubi_2\dbi_2\dbi_3\dbi_4\dbi_5\dbi_6&u_6u_5u_4u_3u_2u_1 \\
   d_1\ubi_1\dbi_1\dbi_2\dbi_3\dbi_4\dbi_5\dbi_6&u_6u_5u_4u_3u_2u_1.
\end{aligned}
\end{equation}
Recall that in~(\ref{eq:FirstIter2}) we do exactly the same operation as in~(\ref{eq:FirstIter}), just with some more information attached to the letters of the Dyck word~$w$. A similar method can be used to generate all Dyck paths $\euD(k,0,\d_2)$ by starting from $\overline{\euD}(k,0,\d_2)$.

In the next step we would have swapped two $U$'s and let both gradually propagate to the last allowed word. But already in this case the resulting words get complicated and so we need to simplify our strategy. Before we do so it is perhaps time to reveal the reason why. In the crucial step that follows, we associate the letters $u_i$ and $d_i$ of all Dyck paths, generated according to the rule Eq.~(\ref{eq:TheRule}), with the scalar coefficients given by Eqs.~(\ref{eq:LadderOps})
\begin{subequations}\label{eq:symbol2real}
  \begin{align}
    u_i & \mapsto \la_i \label{eq:symbol2realA}\\
    d_i & \mapsto \mu_i\label{eq:symbol2realB}
  \end{align}
\end{subequations}
resulting in
\begin{equation}\label{eq:typeChange}
  w\mapsto\prod_i\la_i\prod_i\mu_i=x.
\end{equation}
The number $x$ is the result of a transformation we call \emph{evaluation} of a Dyck path $w$.

We already linked boson words Eq.~(\ref{eq:bosonicWord}) with Dyck words $W$ via bijection~(\ref{eq:bijection}). Because $w$ is equivalent to $W$, we automatically linked boson words with Dyck words $w$.  But here we go further thanks to the fact that  $w$ carries much more information. It allows not only to properly count the boson words but also to relate the result of their action on a vacuum state $\psi$ via the assignment in Eqs.~(\ref{eq:symbol2real}).

Now it is clear what our goal is: we collect all $w$'s, evaluate them and sum them. This will be equivalent to the action of a sum of all bosonic monomials with the total number of $k$ operators $A$ and $\Adg$ acting on $\psi$, whose difference between the number of creation and annihilation operators is $\d_2$ as a consequence of Eqs.~(\ref{eq:symbol2real}) and~(\ref{eq:bijection}). The operations~(\ref{eq:symbol2real}) and (\ref{eq:typeChange})  are peculiar since we effectively changed the variable's type from a symbol and word to a real number. The number $x$ is not, in general, unique since the scalars $\la_i,\mu_i$ commute and it may happen that two different Dyck words $w$ lead to the same $x$. But this does not bother us since $x$ will not be used to identify a Dyck path.

In order to execute the transformation Eq.~(\ref{eq:typeChange}) it is critical  to be able to list all Dyck paths. Here comes the power of Eq.~(\ref{eq:TheRule}). It informs us that the set of indices $i$ of every Dyck path $w$ is identical for the ascending ($u_i$) and descending ($d_i$) sections\footnote{Cf. the indices of every row in~(\ref{eq:FirstIter2}) -- the indices are the same for $u$'s and $d$'s but they are not ordered in a particularly lucid way.}. Hence to systematically list all Dyck paths $\euD(k,0,\d_2)$ as the offspring of the highest Dyck path we may only focus on the ascending or descending sections of $w$. But it is very easy to list  ascending segments only. Every Dyck path $\euD(k,0,\d_2)$  contains the following ascending segments: $\{u_i\}_{i=1}^{(k+\d_2)/2}$ (see the properties of Dyck paths discussed in Sec.~\ref{sec:DyckCount}). As follows from  the first row of~(\ref{eq:GeneralDyckSegments}), the difference of indices of two consecutive ascending segments must be one ($(i+1)-i=1$). This is the greatest possible difference because if two ascending segments $u_i$ and $u_j$ are separated by several descending segments $\{d_k\}_k$ it follows from the second and third line of~(\ref{eq:GeneralDyckSegments}) that $j-i<1$.
We just rephrased a simple fact easily seen by inspecting any Dyck path (see Figs.~\ref{fig:DyckExamples},~\ref{fig:AllDycks} and~\ref{fig:maxminDyck}). Now we know how to generate all ascending sections of a Dyck path $\euD(k,0,\d_2)$ and we can directly substitute $\lambda_i$ for $u_i$ as dictated by Eq.~(\ref{eq:symbol2realA}) and move on to evaluating the whole Dyck path. Let's start with a special case to make the exposition easy to follow.

\subsubsection*{Dyck paths $\euD(k,0,0)$}

The product representing the ascending portion of the highest Dyck path $\hiD(k,0,0)$ is $\prod_{i=1}^{k/2}\la_{i}$. The lowest Dyck path corresponds to $\la^{k/2}_{1}$. We can visualize the products obtained from the ascending segments of all Dyck paths in a different kind of diagram in  Fig.~\ref{fig:DyckStairs}~(a) reflecting the example $k=14$  in Fig.~\ref{fig:maxminDyck}~(a). The solid stair path is the highest product and the dotted line is the lowest one.  Any product from the ``wedge'' between the two extremal cases comes from an allowed Dyck path iff the indices for any consecutive pair $\la_i\la_j$ of a product satisfy
\begin{equation}\label{eq:allowedIndices}
  i-j\leq1.
\end{equation}
Using the fact that (i) the number of ascending segments  is equal to the number of descending  segments  (and so the number of $\la$ products evaluating the ascending sections is equal to the number of $\mu$ products) and (ii) their indices are identical as a consequence of~(\ref{eq:TheRule}), the sum over all evaluated Dyck paths reads
\begin{equation}\label{eq:mainSum1}
    \sum_{m_j=j}^{m_{j+1}}\la_{m_j-j+1}\mu_{m_j-j+1}\Bigg[
    \hdots\bigg[\sum_{m_2=2}^{m_{3}}\la_{m_2-1}\mu_{m_2-1}\bigg[\sum_{m_1=1}^{m_{2}}\la_{m_1}\mu_{m_1}\bigg]\bigg]\Bigg].
\end{equation}
We will see in Examples~\ref{exa:k2and3and5and100},~\ref{exa:2modesqueez} and~\ref{exa:3modescubic} and further in Sec.~\ref{sec:complex} that this $j$-multiple recursive  sum can be explicitly evaluated (and very fast for large $j$ for that matter). Note that the sum's upper bound does not change: $m_2=m_3=\hdots m_{j+1}\equiv k/2$ but for obvious reasons we had to introduce distinct dummy indices. Moreover, the number of sums in the recursive formula is also $j=k/2$. So indeed, the  outermost sum is not a sum at all since $\sum_{m_j=j}^{m_{j+1}}\la_{m_j-j+1}\mu_{m_j-j+1}=\la_1\mu_1$ as visualized by the overlapping segment of the solid and dotted line  in Fig.~\ref{fig:DyckStairs}~(a). Perhaps it is time for an example.
\begin{figure}
    \centering
  \includegraphics{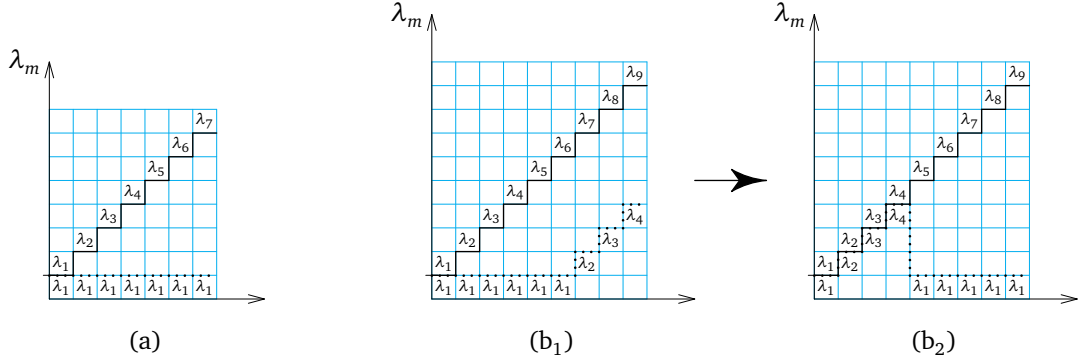}
   \caption{The $x$ axes count the number of ascending segments. The solid stair-like path and the dotted line on the left represent the evaluated ascending sections of the highest Dyck path $\hiD(14,0,0)$ and lowest Dyck path $\loD(14,0,0)$, respectively, that are depicted in Fig.~\ref{fig:maxminDyck}~(a).  In the middle plot we capture the same situation for $\euD(14,0,4)$ from Fig.~\ref{fig:maxminDyck}~(b). In the right plot we reordered the evaluated ascending segments of the lowest Dyck path as suggested by the ascending evaluated product for $\loD(14,0,4)$ in Eq.~(\ref{eq:loDykeReordered}).}
   \label{fig:DyckStairs}
\end{figure}
\begin{exa}\label{exa:explicitrecursion}
  Let $k=8$ and this case can be illustrated on a staircase diagram like the one in Fig.~\ref{fig:DyckStairs}~(a). In the present case the diagram ends with the stair where $\la_4$ lies. We can recreate Eq.~(\ref{eq:mainSum1}) from ``inside'' by following the condition~(\ref{eq:allowedIndices}) for $\la_i\la_j$ and the items (i) and (ii) above. Therefore, the innermost sum reads $\la_1\mu_1+\la_2\mu_2+\la_3\mu_3+\la_4\mu_4$ (see the horizontal column in  Fig.~\ref{fig:DyckStairs}~(a) between $\la_1$ on the dotted line and $\la_4$ on the stair). The sum is multiplied by the components of the preceding column ($\{\la_j\mu_j\}_{j=1}^3$) but it depends on $j$ (as~(\ref{eq:allowedIndices}) dictates) how many summands of the innermost sum  are allowed to be multiplied -- for $\la_j\mu_j$ the previous sum goes from 1 up to $j+1$:
    \begin{equation}\label{eq:explicitRecursionk8FirstIter}
      \begin{aligned}
      &\ \la_3\mu_3(\la_1\mu_1+\la_2\mu_2+\la_3\mu_3+\la_4\mu_4) \\
        +&\ \la_2\mu_2(\la_1\mu_1+\la_2\mu_2+\la_3\mu_3) \\
        +&\ \la_1\mu_1(\la_1\mu_1+\la_2\mu_2).
      \end{aligned}
    \end{equation}
  Moving to the next level, only two values ($\la_1\mu_1$ and $\la_2\mu_2$) are possible. Hence by repeating the exact same rules, the third inner sum reads
    \begin{equation}
      \begin{aligned}
        &\ \la_2\mu_2\la_3\mu_3(\la_1\mu_1+\la_2\mu_2+\la_3\mu_3+\la_4\mu_4) \\
        +&\ \la_2\mu_2\la_2\mu_2(\la_1\mu_1+\la_2\mu_2+\la_3\mu_3) \\
        +&\ \la_2\mu_2\la_1\mu_1(\la_1\mu_1+\la_2\mu_2)\\
      \end{aligned}
    \end{equation}
  together with
    \begin{equation}
      \begin{aligned}
        &\ \la_1\mu_1\la_2\mu_2(\la_1\mu_1+\la_2\mu_2+\la_3\mu_3) \\
        +&\ \la_1\mu_1\la_1\mu_1(\la_1\mu_1+\la_2\mu_2)
      \end{aligned}
    \end{equation}
  obtained from the second and third line of~(\ref{eq:explicitRecursionk8FirstIter}) by multiplying by $\la_1\mu_1$.  Finally, the last sum (not being a sum since the only possibility is $\la_1\mu_1$) terminates the recursion:
    \begin{equation}\label{eq:explicitRecursionk8}
      \begin{aligned}
        &\ \la_1\mu_1\la_2\mu_2\la_3\mu_3(\la_1\mu_1+\la_2\mu_2+\la_3\mu_3+\la_4\mu_4) \\
        +&\ \la_1\mu_1\la_2\mu_2\la_2\mu_2(\la_1\mu_1+\la_2\mu_2+\la_3\mu_3) \\
        +&\ \la_1\mu_1\la_2\mu_2\la_1\mu_1(\la_1\mu_1+\la_2\mu_2)\\
        +&\ \la_1\mu_1\la_1\mu_1\la_2\mu_2(\la_1\mu_1+\la_2\mu_2+\la_3\mu_3) \\
        +&\ \la_1\mu_1\la_1\mu_1\la_1\mu_1(\la_1\mu_1+\la_2\mu_2).
    \end{aligned}
  \end{equation}
  By revising the number of summands (equal to 14), which counts the number of Dyck paths $\euD(k,0,0)$, we may verify that it agrees with Eq.~(\ref{eq:Catalan}) for $k=8$.
\end{exa}

\subsubsection*{Dyck paths $\euD(k,0,\d_2)$}

As already mentioned, the strategy to list all possible Dyck paths $\euD(k,0,\d_2)$~is~similar to the case $\d_2=0$. Starting from the highest Dyck path we swap our way to the lowest Dyck path. Instead of doing it explicitly and encountering confusing expressions like those in Eqs.~(\ref{eq:FirstIter}) and~(\ref{eq:FirstIter2}), we better use the rule Eq.~(\ref{eq:TheRule}). But this time the situation seems to be more complicated. The descending section of any Dyck path is shorter than the ascending one whenever $\d_2>0$. So how do we know what $u_i$ and $d_i$ belong to the same Dyck path to properly evaluate the path according to Eq.~(\ref{eq:typeChange})? Because $\euD(k,0,\d_2)$ is confined to the positive quadrant, the descending section of the lowest Dyck path $\loD(k,0,\d_2)$ contains $(k-\d_2)/2$ of $d_1$ segments
\begin{equation}\label{eq:loDyke}
\underbrace{d_1\hdots\hdots d_1}_{k-\d_2\over2},
\end{equation}
whose evaluation (following Eq.~(\ref{eq:typeChange})) equals the product
\begin{equation}\label{eq:loDykeevaluated}
\mu_{1}^{k-\d_2\over2}.
\end{equation}
The highest Dyck path $\hiD(k,0,\d_2)$ is always of the form
\begin{equation}\label{eq:hiDyke}
\underbrace{d_{\d_2+1}\dots\dots d_{k+\d_2\over2}}_{k-\d_2\over2}\underbrace{u_{k+\d_2\over2}\dots\dots u_1}_{k+\d_2\over2}
\end{equation}
and the number of $d_i$ segments correctly coincides with~(\ref{eq:loDyke}). When~(\ref{eq:hiDyke}) is evaluated, we get the expression
\begin{equation}\label{eq:hiDevaluated}
\prod_{i=1}^{k-\d_2\over2}\mu_{\d_2+i}\prod_{i=1}^{k+\d_2\over2}\la_{i}.
\end{equation}
As previously described, to generate all Dyck paths we apply the rule Eq.~(\ref{eq:TheRule}) on the highest Dyck path Eq.~(\ref{eq:hiDyke}) and end up with the lowest one (Eq.~(\ref{eq:loDyke})). But the rule in Eq.~(\ref{eq:TheRule}) treats the ascending and descending segments (more precisely, the indices of their letters $u_i$ and $d_i$) in the same way. So it treats the indices of $\la_i$ and $\mu_i$ in the same way as well. As a consequence, only the underbraced portion of the ascending segment product taken from Eq.~(\ref{eq:hiDevaluated})
$$
\prod_{i=1}^{k+\d_2\over2}\la_{i}=\la_{1}\times\hdots\times\la_{\d_2}\underbrace{\la_{\d_2+1}\times\hdots\times\la_{(k+\d_2)/2}}_{k-\d_2\over2}
$$
will change as we descend to the lowest Dyck path. After  the lowest Dyck path is evaluated, its value must necessarily be equal to the product
\begin{equation}\label{eq:loDykeReordered}
\underbrace{\mu_{1}^{k-\d_2\over2}}_{\quad(\mbox{\scriptsize I})\quad}\underbrace{\la_{1}^{k-\d_2\over2}}_{\quad(\mbox{\scriptsize II})\quad}
\underbrace{\prod_{i=1}^{\d_2}\la_{i}}_{\quad(\mbox{\scriptsize III})\quad}.
\end{equation}
The $\mu_1$ product  denoted (I) is the evaluated descending section, product (II) corresponds to the part of the ascending section whose number $(k-\d_2)/2$ must be equal to product (I) and product (III) is the rest of the evaluated ascending section. We can  pictorially see what is happening  in Figs.~\ref{fig:DyckStairs}~($\mbox{b}_1$) and~($\mbox{b}_2$) on the example of $\euD(14,0,4)$ from Fig.~\ref{fig:maxminDyck}~(b).  The solid and dotted staircase in the middle plot is the evaluated ascending segment of the highest and lowest Dyck path as can be directly deduced from Fig.~\ref{fig:maxminDyck}~(b) using Eq.~(\ref{eq:symbol2realA}). Similarly to $\euD(k,0,0)$ exemplified in the left panel, all allowed evaluated ascending sections lie between the two extremal cases. An ascending section is allowed iff   the indices satisfy $i-j\leq1$ for any consecutive pair $\la_i\la_j$ of a product. It is, however, extremely advantageous to reorder the lowest evaluated section according to the prescription derived in Eq.~(\ref{eq:loDykeReordered}). Then we are able to see the product of (II) and (III) from Eq.~(\ref{eq:loDykeReordered}) in the right panel of Fig.~\ref{fig:DyckStairs} as the dotted path. More importantly, all products contain expression (III) and so it can be factored out. In the example, this happens to be $\la_1\la_2\la_3\la_4$ because $\d_2=4$.  The reason we underwent this laborious transformation\footnote{The transformation from $(\mbox{b}_1)$ to $(\mbox{b}_2)$ can  also be understood as a consequence of $\la_i$ being scalars and therefore commuting. Then, in the right panel, the allowed evaluated ascending sections are  suitably reordered allowed sections from the middle panel.} is to find out the summing formula over all evaluated Dyck paths $\euD(k,0,\d_2)$. With the help of Eq.~(\ref{eq:loDykeReordered}) and following the strategy leading to Eq.~(\ref{eq:mainSum1}) we are able to read off the  summing formula from Fig.~\ref{fig:DyckStairs}~($\mbox{b}_2$):
\begin{equation}\label{eq:mainSum2}
    \prod_{i=1}^{\d_2}\la_{i}\Bigg(
    \sum_{m_j=j}^{m_{j+1}}\la_{m_j-j+1}\mu_{m_j-j+1}\Bigg[
    \hdots\bigg[\sum_{m_2=2}^{m_{3}}\la_{m_2-1}\mu_{m_2-1}\bigg[\sum_{m_1=1}^{m_{2}}\la_{m_1}\mu_{m_1}\bigg]\bigg]\Bigg]
    \Bigg).
\end{equation}
Eq.~(\ref{eq:mainSum2}) is a modification of Eq.~(\ref{eq:mainSum1}) in a transparent way: the total number of sums is $j=(k-\d_2)/2$ -- this is where the solid and dotted paths in Fig.~\ref{fig:maxminDyck}~($\mbox{b}_2$) do not overlap. The overall multiplicative constant is the overlapping portion of the diagram where the values of $\la_i$ and $\mu_i$ are fixed. The upper bounds in~(\ref{eq:mainSum2}) have again the same value $m_2=\hdots=m_{j+1}=(k+\d_2)/2$.

Eqs.~(\ref{eq:mainSum1}) and (\ref{eq:mainSum2}) can be used to calculate Eq.~(\ref{eq:ExpFinalExpansion}). This is where the true power of our method is exposed: summing over all evaluated Dyck paths is greatly facilitated whenever the action of $A$ and $\Adg$ in Eqs.~(\ref{eq:LadderOps}) is sufficiently ``nice''. But polynomials are always summable (via Bernoulli's formula) and so the iterative sums in Eqs.~(\ref{eq:mainSum1}) and (\ref{eq:mainSum2}) are all closed expressions.

To compare our algorithm with other analytical approaches it is fair to say that for a very special case of $A=a^d$ the expression $\Ncal\big[(\adg^d+a^d)^k\big]$ can be used for the same task. The methods to normal order the expression were developed in~\cite{blasiak2005boson,blasiak2005combinatorics}. A different kind of combinatorial insight was used by the authors (more in the spirit of~\cite{katriel73}) with the help of generalized Stirling numbers. The results were subsequently used for the calculation of the expectation value $\ave{(\adg^d+a^d)^k}_{\ket{\a}}$ for a coherent state $\ket{\a}$. From the conceptual point of view, normally ordered expressions are satisfactory since they are essentially closed and hold in the strong sense, that is, without acting on any state. But that may be actually a mild disadvantage because it still has to act on a state and the resulting formula can become further complicated. What is even perhaps less clear is the computational complexity of how effectively it can be evaluated for $k\gg1$ in actual calculations. But irrespective of that, here we aim at multiboson operators $A$ and $\Adg$ (constituting physically interesting Hamiltonians), where to the author's best knowledge, analytical results are essentially  non-existent.  This will be further used to evaluate the action of the isometry Eq.~(\ref{eq:generalEvolutionOp}) expanded as a Taylor series to high order in $k$.

The only (partial) exception seems to be~\cite{mansour2008normal} where, however, two boson modes are coupled such that $[a_1,a^\dg_2]=1$. This is not the behavior of ordinary bosons considered here and in quantum field theory or quantum optics.

\subsection{Complexity estimation}\label{sec:complex}

Let's analyze the complexity of calculating Eqs.~(\ref{eq:mainSum1}) and (\ref{eq:mainSum2}) for the boson operator $A$ being a product of $d$ creation and annihilation operators. This will give us an idea of how tractable our method is. We will focus on~(\ref{eq:mainSum1}) as the worst case scenario and denote the sums' upper bounds $m$. From the general action of a boson creation and annihilation operator in Eqs.~(\ref{eq:bosonSteps}) we find that $\la_{m_1}\mu_{m_1}=f^{(d)}$ is a polynomial of order $d$ (written symbolically). Then the innermost sum of~(\ref{eq:mainSum1})
\begin{equation}\label{eq:FirstRecursion}
\sum_{m_1=1}^{m}\la_{m_1}\mu_{m_1}\equiv\sum_{m_1=1}^{m}f^{(d)}=f^{(d+1)}
\end{equation}
is a  polynomial of order $d+1$ and so the cost of calculating the sum is $\Ocal(dm)$. To evaluate the following sum, the polynomial $f^{(d+1)}$ is multiplied by  $\la_{m_2}\mu_{m_2}$ and this is again a polynomial of order $d$: $\la_{m_2}\mu_{m_2}=f^{(d)}$. The polynomial is summed over $m_2$ but only to $m-1$. In Eq.~(\ref{eq:mainSum1}) it is the lower bound that changes by one so it is the same number of operations. The whole recursive formula can then be symbolically depicted as follows
\begin{equation}\label{eq:symbPolys}
  \left.\begin{aligned}
    f^{(d)}  &\overset{\sum^{m}_1\limits}{\longrightarrow}  f^{(d+1)}  \\
    f^{(d+1)}f^{(d)}  &\overset{\sum_1^{m-1}\limits}{\longrightarrow} f^{(2d+2)}\\
    f^{(2d+2)}f^{(d)} & \overset{\sum_1^{m-2}\limits}{\longrightarrow} f^{(3d+3)}\\
    &\ \dots
\end{aligned}\right\rbrace\quad m\mbox{\ times.}
\end{equation}
We first find the order of the resulting polynomial. From the RHS of (\ref{eq:symbPolys}) we deduce that the polynomial after the last iteration $m=k/2$ is of order $m(d+1)=k/2(d+1)$ (recall that since $\d_2=0$, $k$ is always even). For the complexity estimation we observe  that the $i$-th recursive sum on the LHS of~(\ref{eq:symbPolys}) is a sum over $m-i+1$ summands and a polynomial of order $di+i-1$. Henceforth
\begin{equation}\label{eq:polyComplexity}
  \sum_{i=1}^m(m-i+1)(di+i-1)=\frac{1}{6} \left((d+1)m^3+3 d m^2+2 d m-m\right)
\end{equation}
and so the recursive calculation roughly scales as $\Ocal(k^3)$.

This is the main advantage of our method. It provides a tremendous (exponential) speed-up for calculating $(\Adg\pm A)^k\psi$. There are $2^k$ summands after expanding $(\Adg\pm A)^k$ and so their number grows exponentially with $k$. This is prohibitively slow to compute even for a small $k$ and it is perhaps obvious that a similar obstacle remains even after we take into account the truncated sum in Eq.~(\ref{eq:ExpFinalExpansion}).   In more detail, one finds
\begin{equation}\label{eq:estimate}
\mathop{\sum_{\substack{\ \d_2=0,2,\dots,k\ \ \mbox{\scriptsize{or}}\\\d_2=1,3,\dots,k-1}}}G(k,0,\d_2)=\binom{k}{\lfloor{k\over2}\rfloor}=\Ocal\big(2^{k-\log_2{\sqrt{k}}}\big),
\end{equation}
where the Stirling formula for the factorial can be used to get the asymptotic estimate of the RHS. This is the most favorable case since for $\d_1>0$ the number of summands in Eq.~(\ref{eq:ExpFinalExpansionGeneral}) is always bigger for fixed $k$ and $\d_2$. By setting $\d_1=k$ we get the worst-case scenario where the growth is simply $\Ocal(2^k)$ since none of the summands from $(\Adg+A)^k$ acting on a state $\psi^{(k)}$ will vanish (see the outer sum of Eq.~(\ref{eq:ExpFinalExpansionGeneral})). The largest number of Dyck paths in the worst case scenario happens when $\d_1=\d_2$ for $k$ even as long as $\d_1\geq k/2$. In this case  Eq.~(\ref{eq:mostGeneralDyckCounting}) tells us that the number of Dyck paths is $G(k,\d_1,\d_1)=\binom{k}{k/2}$. The Dyck paths are then confined to a diamond-like shape illustrated in Fig.~\ref{fig:pathstairs}~(a) for $\d_1=\d_2=k=8$.
\begin{figure}[t]
  \centering
  \includegraphics{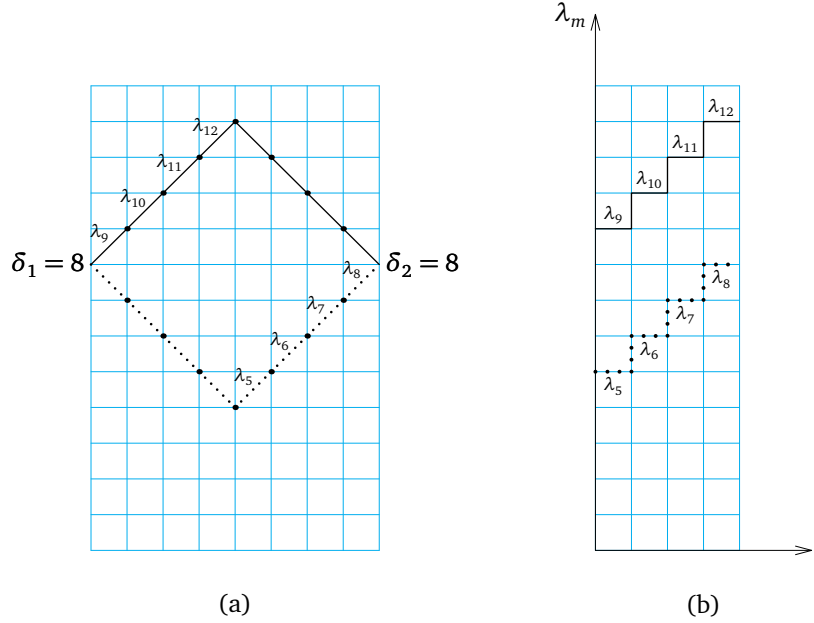}
   \caption{All Dyck paths $\euD(8,8,8)$ are confined to the diamond on the left where the ascending segments of the highest and lowest (dotted) Dyck path are evaluated by $\la_i$. In panel (b) we see the corresponding staircase diagram to illustrate the complexity of calculating $(\Adg+A)^k\psi^{(\d_1)}$ in the worst case scenario that occurs for a given $k$ when $\d_1=\d_2$ and $\d_1\geq k/2$ ($k$ even). Here we set $\d_1=k$.}
   \label{fig:pathstairs}
\end{figure}
In Fig.~\ref{fig:pathstairs}~(b) we see in the corresponding staircase diagram that unlike the diagram in Fig.~\ref{fig:DyckStairs}~(a),  the ascending segments are not approaching. Consequently, the number of recursive sums ($m=k/2+1=5$ in~(\ref{eq:FirstRecursion}) in this case) in each step is constant. Following the estimate as in Eq.~(\ref{eq:symbPolys}) we arrive at
\begin{equation}
  (m+1)\sum_{i=1}^m(di+i-1)={1\over2}\big((d+1)m^3+2dm^2-m(1-d)\big).
\end{equation}
We see that the calculation is nearly as effective as for in~(\ref{eq:polyComplexity}).

Let's go back to the $\d_1=0$ case as that will be our source of examples to come. The number of sets of Dyck paths with  the same value of $\d_2$ grows linearly with $k$ in the expression $(\Adg+A)^k$. Because we have just shown that the sum over these ``$\d_2$-sets'' in Eq.~(\ref{eq:mainSum1}) goes as $\Ocal(k^3)$, the calculation of (\ref{eq:ExpFinalExpansion}) becomes tractable. But this also means that the calculation of
\begin{equation}\label{eq:generalEvolutionOpExpanded}
  V\psi^{(0)}=\exp{\big[r(A^\dg-A\big)\big]}\psi^{(0)}\simeq\sum_{k=0}^K{r^k\over k!}(A^\dg-A\big)^k\psi^{(0)}
\end{equation}
is polynomial in $K$ ($\psi^{(0)}$ is the Minkowski vacuum $\ket{0}$). The number of summands in~(\ref{eq:generalEvolutionOpExpanded}) grows linearly with $K$ but as will be discussed in Example~\ref{exa:k2and3and5and100} for $k=100$, even this can be significantly reduced.

Let's  demonstrate the power of Eqs.~(\ref{eq:mainSum1}) and~(\ref{eq:mainSum2}).
\begin{exa}\label{exa:k2and3and5and100}
As a generic single-mode example we will calculate $(\Adg+A)^k\ket{0}$, where $A=a^3$, for several values of $k$.
  \begin{enumerate}
    \item[\fbox{$k=2$}] In this case  only two possibilities exist: $\d_2=0,2$. For $\d_2=0$ the number summands $j$ in Eq.~(\ref{eq:mainSum1}) equals $k/2=1$ and so there is nothing to sum -- only the outermost sum survives. But this is not a sum at all as previously discussed because there is only one Dyck path $\euD(2,0,0)$. It remains to calculate $\mu_1$ and $\la_1$ with the help of Eqs.~(\ref{eq:bosonSteps}).         Hence, by denoting $\psi^{(k-1)}=\ket{3(k-1)}$ we get from Eqs.~(\ref{eq:LadderOps})
        \begin{equation}\label{eq:exak2Coefs}
          \la_k=\mu_k=\sqrt{3k(3k-1)(3k-2)}
        \end{equation}
        and so $\mu_1=\la_1=\sqrt{3!}$.

        For $\d_2=2$ the situation is similar. The number of sums in Eq.~(\ref{eq:mainSum2}) is $j=(k-\d_2)/2=0$ so only the  overall factor $\la_1\la_2$ is present. From Eq.~(\ref{eq:exak2Coefs}) we get $\la_2=\sqrt{4\times5\times6}$. The final answer is
        $$
        \big(\Adg+A\big)^2\ket{0}=\la_1\mu_1\ket{0}+\la_1\la_2\ket{6}=3!\ket{0}+\sqrt{6!}\ket{6}.
        $$
        This can be readily confirmed by directly calculating $\big(a^3+\adg^3\big)^2\ket{0}$.
    \item[\fbox{$k=3$}] For $\d_2=1$ there are two Dyck paths $\euD(3,0,2)$. Let's evaluate this case only. The number of summands in Eq.~(\ref{eq:mainSum2}) is now one ($(k-\d_2)/2=1$) and its upper bound is $m_2=(k+\d_2)/2=2$. But to start showing the potential of the Dyck path approach for general $k$ we will not set $m_2=2$ yet. Instead, we calculate
        \begin{equation}\label{eq:firstSumk2}
          \sum_{k=1}^{m_2}3k(3k-1)(3k-2)=\frac{3}{4}m_2(m_2+1)(3m_2-2)(3m_2+1)
        \end{equation}
        and take the liberty of setting $m_2=2$ in the evaluated sum after we write the final answer given by Eq.~(\ref{eq:mainSum2}):
        $$
        \frac{3\sqrt{3!}}{4}m_2(m_2+1)(3m_2-2)(3m_2+1)=\sqrt{3!}(3!+4\times5\times6)=\sqrt{3!}\,126.
        $$
        This is the first step to achieve a closed expression even for a large $k$ to be used in the next two examples. The calculation again agrees with the $\ket{3}$ coefficient of
        $$
        \big(\adg^3+a^3\big)^3\ket{0}=\sqrt{3!}(3!+120)\ket{3}+\sqrt{9!}\ket{9}.
        $$
    \item[\fbox{$k=5$}] Here we will illustrate the appearance of one nested sum in Eq.~(\ref{eq:mainSum2}) for $\d_2=1$ coming from $j=2$. The recursive summation results in
        \begin{align}\label{eq:resultExamplek5}
           & \frac{9}{1120} (m_3-1)m_3(m_3+1)(m_3+2)\nn \\
          \times & \big(2835 m_3^4-5130 m_3^3-3915 m_3^2+8034 m_3-808\big),
        \end{align}
        where now we set $m_3=3.$ The final answer for $\d_2=1$ is therefore $\sqrt{3!}\,76356$. This can be verified by hand but it is already rather lengthy.
    \item[\fbox{$k=100$}] To show the strength of our method we calculate $\big(\Adg+A\big)^{100}\ket{0}\equiv\big(\adg^3+a^3\big)^{100}\ket{0}$ for $\d_2=0$. The number of Dyck paths  given by Eq.~(\ref{eq:Catalan}) is
        $$
        1\,978\,261\,657\,756\,160\,653\,623\,774\,456.
        $$
        The brute-force calculation is out of the question on today's computers.

        Using our approach we see that Eq.~(\ref{eq:mainSum1}) contains 50 recursive sums. It takes roughly 20 seconds to obtain an analytical result in Mathematica on a single-core average laptop. The result (the equivalent of (\ref{eq:resultExamplek5})) is a polynomial of order 200 whose printout would take 24 A4~pages. The order coincides with the derivation below Eq.~(\ref{eq:symbPolys}) for $d=3$ (note $k/2(d+1)=200$). Given the recursive character of Eq.~(\ref{eq:mainSum1}), the code used to generate the result is a one-liner. To calculate the complete expression $\big(\adg^3+a^3\big)^{100}\ket{0}$ we, of course, have to calculate the remaining cases for the rest of the $\d_2$'s ($\d_2=2,4,\dots,100$) given by Eq.~(\ref{eq:mainSum2}). The number of recursions is $j=(k-\d_2)/2$ but these recursions are already contained in the largest calculation for $\d_2=0$. The only thing that differs for each $\d_2$ is the sums' upper bound. Recall that we calculate the sum for an arbitrary upper bound $m_i$, where $2\leq i\leq{j+1}$, and plug the value $(k+\d_2)/2$ after the calculation. So we only need to save all~$j$ intermediate nested sums calculated for $\d_2=0$, plug $m_j=(k+\d_2)/2$ for the rest of $\d_2$ and multiply it by the overall factor $\prod_{i=1}^{\d_2}\la_i$ as required by Eq.~(\ref{eq:mainSum2}). Indeed, the first nested sum in the $k=100$ calculation is precisely Eq.~(\ref{eq:resultExamplek5}), where instead of $m_3$ we have $m_{50}$. So all the remaining $\d_2$ operations are computationally for free!
    \item[\fbox{$k=200$}] This case is to verify the cubic complexity scaling (for a fixed $d$) derived in Eq.~(\ref{eq:polyComplexity}) for the recursive sum Eq.~(\ref{eq:mainSum1}). It takes some 240 seconds to obtain an analytical result in Mathematica. This is indeed expected: $200^3/100^3=8$.  The result is a polynomial of order 400 whose printout fits 105 A4 pages.
\end{enumerate}
\end{exa}
Even without using the computational shortcut for $\d_2>0$ described in the $k=100$ case, we already got rid of the exponential time complexity by summing over all Dyck paths $\euD(k,0,\d_2)$.

The cubic boson expression from the previous examples can be thought as a Hamiltonian generating a unitary evolution operator, see Example~\ref{exa:decoup} for the exact formulation. But the following unitary is even more important: the two-mode squeezing operator. Its factorization is a routine procedure and so we can compare it with our approach.

\begin{exa}\label{exa:2modesqueez}
  Let's investigate $V\psi^{(0)}=\exp[r(a^\dg_1a^\dg_2-a_1a_2)]\psi^{(0)}$, where $\psi^{(0)}$ is a ground state coinciding with the Minkowski vacuum $\ket{0}$ annihilated by $a_i$.  We will compare the closed form of the amplitude  coefficients provided by
  \begin{equation}\label{eq:twomodeSqueezAnal}
    \exp[r(a^\dg_1a^\dg_2-a_1a_2)]\ket{0}={1\over\cosh{r}}\sum_{n=0}^\infty\tanh^{n}{r}\ket{n}_{1}\ket{n}_{2}
  \end{equation}
  with
  \begin{equation}\label{eq:twomodeSqueezTaylor}
    V\ket{0}\simeq\sum_{k=0}^K{r^k\over k!}(a^\dg_1a^\dg_2-a_1a_2)^k\ket{0}
  \end{equation}
  coming from~Eq.~(\ref{eq:generalEvolutionOpExpanded}).   The calculation reduces to $(a^\dg_1a^\dg_2-a_1a_2)^k\ket{0}\equiv(A-\Adg)^k\ket{0}$ for $0\leq k\leq K$. Using the methods developed in this paper we can afford to set $K$ very high and study the evolution for any fixed value of $r$ for which $V$ converges. From Eq.~(\ref{eq:bosonSteps}) we find
  $$
  \mu_k=\la_k=k
  $$
  by setting $\psi^{(k-1)}=\ket{k-1}_1\ket{k-1}_2$ in Eqs.~(\ref{eq:LadderOps}). It is perhaps clear that for any $k$, the admissible values of $\d_2$ tell us to what state $\ket{n}_{1}\ket{n}_{2}$ we calculate the amplitude contribution. So, for example, whenever $\d_2=0$, the contribution goes to the vacuum amplitude $\ket{0}$, the $\d_2=1$ contributions go to $\ket{1}_{1}\ket{1}_{2}$ etc., up to $\ket{k}_{1}\ket{k}_{2}$ since we know that $0\leq\d_2\leq k$ if $k$ is even (if it is odd we offset $\d_2$ by one, see Sec.~\ref{sec:DyckCount}).

  If we set $K=200$, all 200 calculations $(a^\dg_1a^\dg_2-a_1a_2)^k\ket{0}$ take roughly 250 seconds to complete \emph{without}  using further optimization described in the $k=100$ case of Example~\ref{exa:k2and3and5and100}. As a matter of fact, the result of the $k$-th expansion summand could have been used to speed up the calculation in all the exponents that followed. Neither this optimization was implemented. Just for comparison, considering only the highest contribution ($k=200$) to the vacuum amplitude ($\d_2=0$), they come from the immense number  $$
  C_{200}\approx2^{189}
  $$
  of Dyck paths, according to~Eq.~(\ref{eq:Catalan}).

  If the evolution operator $V$ was evaluated using its Taylor expansion, it would be necessary to assume $r\to0$ as only a few first expansion coefficients are possible to manage. Not in our case. For $K=200$ and a very high\footnote{The magnitude of $r$ is a relative notion especially because  $0\leq r<\infty$. Nevertheless, compared to what has been possible so far for the Taylor expansion, it is indeed high and following our performance analysis in Sec.~\ref{sec:complex}, we can now calculate any finite value just by sufficiently increasing the Taylor expansion. From a physical point of view it is also high. If translated to the quantum optical scenario, where $r$ is a squeezing parameter, then the squeezing value $r=1$ corresponds to the noise reduction $10\log_{10}{e^{2r}}\approx8.7\mbox{\ dB}$. This is roughly current state-of-the-art in optical experiments~\cite{PhysRevLett104251102}. If we use a special-relativistic comparison based on the local isomorphism $SU(1,1)/\bbZ_2\simeq SO(2,1)$~\cite{perelomov1986generalized,yurke1986}, then $2r=2$ is the rapidity corresponding approximately to $96.4\%$ of the speed of light.} value of the coupling constant $r=1$, we get very good agreement with $\tanh^n{r}/\cosh{r}$ up to $n=33$:
  \begin{equation}\label{eq:squeezComparion}
    a_{33}={\tanh^{33}{1}\over\cosh{1}}\simeq0.000081001\quad\mbox{vs.}\quad\widetilde{a}_{33}=0.0000799909.
  \end{equation}
  The lower coefficients starting with $n=30$ are essentially indistinguishable from the ones given by the analytical expression. By comparing the Taylor expansion of the analytical coefficients $a_{n}={\tanh^n{r}\over\cosh{r}}$ with $\tilde{a}_{n}$ we indeed see the perfect match. Also recall that the calculation of Eq.~(\ref{eq:twomodeSqueezTaylor}) is perturbative but analytical and the value of the coupling constant $r$ is inserted after the calculation. So the expansion is not needed to be recalculated for every numerical value of $r$.
\end{exa}
\begin{exa}\label{exa:3modescubic}
The evolution operator
\begin{equation}\label{eq:cubic}
  W=\exp{\big[r({a^\dg}^3-a^3)\big]}
\end{equation}
is supposed to describe the three-photon degenerate parametric amplifier. Some time ago it attracted a lot of attention by an observation made in~\cite{fisher1984impossibility} that the vacuum expectation value does not converge for $r>0$. The situation was clarified by Braunstein and McLachlan~\cite{braunstein1987generalized} who uncovered the immediate reason  for divergence and used an analytic continuation method based on Pad\'e approximants~\cite{baker1975essentials}  showing convergence for a finite interval of $r$. It is possible that the cause of the bad behavior of $W$ is the lack of a self-adjoint extension on the domain of interest of the Hamiltonian $H_{cub}=r\big({a^\dg}^3-a^3\big)$ as this is a subtle issue for unbounded operators. But to the author's knowledge this problem has not been clarified yet. What is clear is that the Taylor series is essentially useless but even here our method can be used. Following the footsteps of~\cite{braunstein1987generalized} we converted the Taylor expansion coefficients of $W\psi^{(0)}$ ($\psi^{(0)}$ is the Minkowski vacuum $\ket{0}$ and (\ref{eq:exak2Coefs}) was used in the calculation and the difference in the minus sign was taken into account as well) to the Pad\'e polynomial $[83/83]$ and arrived at an identical result. See Fig.~\ref{fig:cubic} and compare it with Fig.~2(a) in~\cite{braunstein1987generalized}.
\begin{figure}[t]
  \centering
  \includegraphics{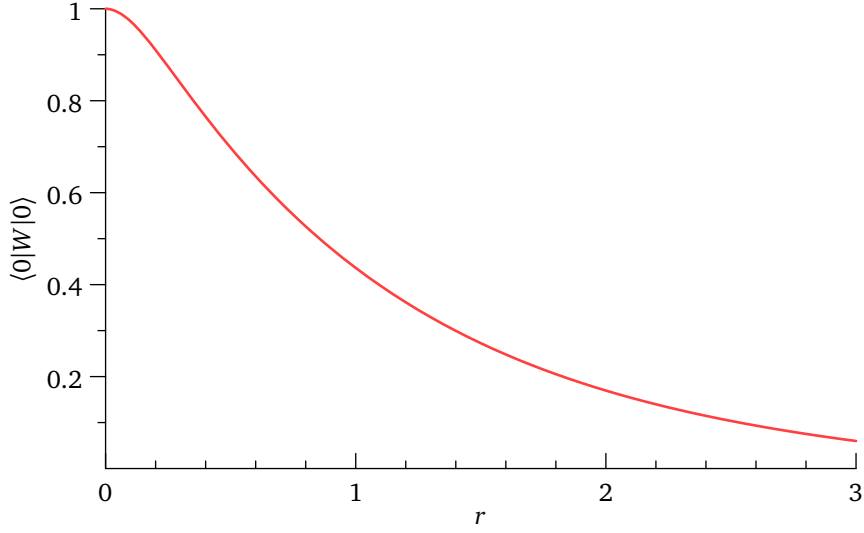}
   \caption{The vacuum expectation value for the cubic boson evolution operator $W=\exp{\big[r({a^\dg}^3-a^3)\big]}$ is plotted as a function of the coupling constant. It coincides with the result of~\cite{braunstein1987generalized} showing that our method can be useful even for the evolution operators whose Taylor series diverges.}
   \label{fig:cubic}
\end{figure}
\end{exa}
\begin{exa}\label{exa:decoup}
  For our final example of the developed method I invite the reader to study~\cite{bradler2015one}, where a recently developed model of a unitary black hole evaporation based on the trilinear boson Hamiltonian
   \begin{equation}\label{eq:triHam}
    H_{\mathrm{tri}}=r\big(ab^\dg c^\dg-\adg bc\big)
    \end{equation}
   is investigated. Similarly to the expression in Eq.~(\ref{eq:cubic}) the unitary operator $W=\exp{[H_{\mathrm{tri}}]}$ is not known to be easily factorizable but unlike the cubic evolution operator it exhibits no problems when it comes to the convergence of its Taylor series.
\end{exa}

\section{Conclusions}

In this work, we introduced a method of evaluating evolution operators for the interaction picture Hamiltonians of the form $H=r(\Adg-A)$, where $A$ is a multimode boson  monomial and an extensive class of sums of boson monomials and $r$ is a coupling constant.  The calculation of the evolution operator $V=\exp{[r(\Adg-A)]}$ is in general a difficult problem. Even though a factorization is always possible, for example, by virtue of the Zassenhaus formula or other decoupling techniques, a simple factorization into a small number of products  is available only if the operators $A$ and $\Adg$  satisfy favorable algebraic properties, such as vanishing commutators. If this is not satisfied, it may happen that there is no simple method left and the remaining techniques at our disposal are computationally demanding.

The Taylor expansion in $r$ with its subsequent action on a state of interest $\psi^{(0)}$ (most often a ground state of $A$ defined as $A\psi^{(0)}=0$) is certainly not the most efficient method as the number of  summands increases exponentially with the expansion order. The method developed here overcomes this problem and allows us to analytically calculate the action of the evolution operator $V$ to a very high order of its Taylor expansion and previously hardly accessible values of the coupling constant. This is possible due to an insight that a  combinatorial structure known as a Dyck path can be related to the action of boson monomials on a ground state $\psi$. It helps us to dramatically reduce the complexity of analytically calculating $(\Adg\pm A)^k\psi^{(0)}$ from $\Ocal\big(2^{k-\log_2{\sqrt{k}}}\big)$ to $\Ocal(dk^3)$  for all multimode boson monomials $A$ of the length $d$ and any ground state $\psi^{(0)}$. The boson expression is therefore not required to satisfy any special algebraic property except for the fact that they behave as ladder operators: $A\,\Adg\psi^{(p)}\propto\psi^{(p)}$. The state $\psi^{(p)}$ is any state from the tower of states generated by the repeated action of $\Adg$ on a ground state. The result is applicable for $V$ acting on a ground state  but the technique is equally efficient for all $\psi^{(p)}$. The  speed-up  is possible due to the existence of a summing recursive formula that can be explicitly evaluated for all such $A$ and $\Adg$ forming $(\Adg\pm A)^k$. Consequently, the complexity of calculating the evolution operator $V\psi^{(0)}$ is polynomial-time as well. All boson  evolution operators (their Taylor expansions) are expressed in terms of $A$ and $\Adg$  and we believe that our technique could be relevant for computational purposes in condensed matter, quantum field theory, quantum optics and other branches of quantum physics. We illustrated the method on a non-trivial example of $A$ and also verified it in the case of a two-mode squeezed operator, where a simple factorization procedure is well known and the degenerate cubic Hamiltonian, whose Taylor series is known to diverge for any value of the coupling constant $r$ and has to be analytically continued using, for instance, the Pad\'e approximants.


\bibliographystyle{unsrt}


\begin{thebibliography}{10}

\bibitem{duncan2012conceptual}
Anthony Duncan.
\newblock {\em The conceptual framework of quantum field theory}.
\newblock Oxford University Press, 2012.

\bibitem{louisell1973quantum}
William~Henry Louisell.
\newblock {\em Quantum statistical properties of radiation.}, volume~2.
\newblock Wiley, 1973.

\bibitem{hall2003lie}
Brian Hall.
\newblock {\em Lie groups, Lie algebras, and representations: an elementary
  introduction}, volume 222.
\newblock Springer, 2003.

\bibitem{magnus1954exponential}
Wilhelm Magnus.
\newblock On the exponential solution of differential equations for a linear
  operator.
\newblock {\em Communications on pure and applied mathematics}, 7(4):649--673,
  1954.

\bibitem{blanes2009magnus}
Sergio Blanes, F~Casas, JA~Oteo, and Jos{\'e} Ros.
\newblock {The Magnus expansion and some of its applications}.
\newblock {\em Physics Reports}, 470(5):151--238, 2009.

\bibitem{trotter1959product}
Hale~F Trotter.
\newblock On the product of semi-groups of operators.
\newblock {\em Proceedings of the American Mathematical Society},
  10(4):545--551, 1959.

\bibitem{wei1963lie}
James Wei and Edward Norman.
\newblock Lie algebraic solution of linear differential equations.
\newblock {\em Journal of Mathematical Physics}, 4(4):575--581, 1963.

\bibitem{fer1958resolution}
F~Fer.
\newblock R{\'e}solution de l'{\'e}quation matricielle ${\rm d}u/{\rm d}t=pu$
  par produit infini d'exponentielles matricielles.
\newblock {\em Acad\'emie Royale de Belgique. Bulletin de la Classe des
  Sciences}, 44:818--829, 1958.

\bibitem{suzuki1976generalized}
Masuo Suzuki.
\newblock {Generalized Trotter's formula and systematic approximants of
  exponential operators and inner derivations with applications to many-body
  problems}.
\newblock {\em Communications in Mathematical Physics}, 51(2):183--190, 1976.

\bibitem{wilcox1967exponential}
RM~Wilcox.
\newblock Exponential operators and parameter differentiation in quantum
  physics.
\newblock {\em Journal of Mathematical Physics}, 8(4):962--982, 1967.

\bibitem{friedrichs1953mathematical}
Kurt~Otto Friedrichs.
\newblock {Mathematical aspects of the quantum theory of fields. Part V. Fields
  modified by linear homogeneous forces}.
\newblock {\em Communications on Pure and Applied Mathematics}, 6(1):1--72,
  1953.

\bibitem{Mielnik1970}
Bogdan Mielnik and Jerzy Plebanski.
\newblock {Combinatorial approach to Baker-Campbell-Hausdorff exponents}.
\newblock {\em Annales de l'institut Henri Poincar\'e (A) Physique
  th\'eorique}, 12(3):215--254, 1970.

\bibitem{bonfiglioli2011topics}
Andrea Bonfiglioli and Roberta Fulci.
\newblock {\em Topics in noncommutative algebra: the theorem of Campbell,
  Baker, Hausdorff and Dynkin}, volume 2034.
\newblock Springer, 2011.

\bibitem{perelomov1986generalized}
A~Perelomov.
\newblock {\em Generalized coherent states and applications}.
\newblock Springer-Verlag, 1986.

\bibitem{cahill1969ordered}
KE~Cahill and RJ~Glauber.
\newblock Ordered expansions in boson amplitude operators.
\newblock {\em Physical Review}, 177(5):1857, 1969.

\bibitem{katriel73}
J.~Katriel.
\newblock Combinatorial aspects of boson algebra.
\newblock {\em Lettere al Nuovo Cimento}, 10(13):565--567, 1974.

\bibitem{blasiak2005boson}
P~B\l{}asiak, A~Horzela, KA~Penson, GHE Duchamp, and AI~Solomon.
\newblock {Boson normal ordering via substitutions and Sheffer-Type
  Polynomials}.
\newblock {\em Physics Letters A}, 338(2):108--116, 2005.

\bibitem{blasiak2005combinatorics}
P~B\l{}asiak.
\newblock Combinatorics of boson normal ordering and some applications.
\newblock {\em arXiv preprint quant-ph/0507206}, 2005.

\bibitem{mikhailov1983ordering}
VV~Mikhailov.
\newblock Ordering of some boson operator functions.
\newblock {\em Journal of Physics A: Mathematical and General}, 16(16):3817,
  1983.

\bibitem{Witschel75}
W~Witschel.
\newblock Ordered operator expansions by comparison.
\newblock {\em Journal of Physics A: Mathematical and General}, 8(2):143, 1975.

\bibitem{mansour2012combinatorics}
Toufik Mansour.
\newblock {\em Combinatorics of set partitions}.
\newblock CRC Press, 2012.

\bibitem{varvak2005rook}
Anna Varvak.
\newblock Rook numbers and the normal ordering problem.
\newblock {\em Journal of Combinatorial Theory, Series A}, 112(2):292--307,
  2005.

\bibitem{Asakly}
Walaa Asakly, Toufik Mansour, and Matthias Schork.
\newblock {Representing elements of the Weyl algebra by labeled trees}.
\newblock {\em Journal of Mathematical Physics}, 54(2):023514, 2013.

\bibitem{mansour2011commutation}
Toufik Mansour and Matthias Schork.
\newblock {The commutation relation $xy= qyx+hf(y)$ and Newton's binomial
  formula}.
\newblock {\em The Ramanujan Journal}, 25(3):405--445, 2011.

\bibitem{stanley1999enumerative}
Richard~P Stanley.
\newblock {\em Enumerative combinatorics. Vol. 2}.
\newblock Cambridge University Press, 1999.

\bibitem{mansour2008normal}
Toufik Mansour and Matthias Schork.
\newblock On the normal ordering of multi-mode boson operators.
\newblock {\em Russian Journal of Mathematical Physics}, 15(1):77--88, 2008.

\bibitem{flajolet2009analytic}
Philippe Flajolet and Robert Sedgewick.
\newblock {\em Analytic combinatorics}.
\newblock Cambridge University Press, 2009.

\bibitem{mohanty1979lattice}
Gopal Mohanty.
\newblock {\em Lattice path counting and applications}.
\newblock Academic Press New York, 1979.

\bibitem{alvarez2002quasi}
G~\'Alvarez, F~Finkel, A~Gonz{\'a}lez-L{\'o}pez, and MA~Rodr{\'\i}guez.
\newblock Quasi-exactly solvable models in nonlinear optics.
\newblock {\em Journal of Physics A: Mathematical and General}, 35(41):8705,
  2002.

\bibitem{bogoliubov1996exact}
NM~Bogoliubov, RK~Bullough, and J~Timonen.
\newblock {Exact solution of generalized Tavis-Cummings models in quantum
  optics}.
\newblock {\em Journal of Physics A: Mathematical and General}, 29(19):6305,
  1996.

\bibitem{links2003algebraic}
Jon Links, Huan-Qiang Zhou, Ross~H McKenzie, and Mark~D Gould.
\newblock {Algebraic Bethe ansatz method for the exact calculation of energy
  spectra and form factors: applications to models of Bose--Einstein
  condensates and metallic nanograins}.
\newblock {\em Journal of Physics A: Mathematical and General}, 36(19):R63,
  2003.

\bibitem{lee2010polynomial}
Yuan-Harng Lee, Wen-Li Yang, and Yao-Zhong Zhang.
\newblock {Polynomial algebras and exact solutions of general quantum nonlinear
  optical models: II. Multi-mode boson systems}.
\newblock {\em Journal of Physics A: Mathematical and Theoretical},
  43(37):375211, 2010.

\bibitem{shift1989}
M.A. Shifman and A.V. Turbiner.
\newblock Quantal problems with partial algebraization of the spectrum.
\newblock {\em Communications in Mathematical Physics}, 126(2):347--365, 1989.

\bibitem{gonzalez1994quasi}
Artemio Gonz\'alez-L\'opez, Niky Kamran, and Peter~J Olver.
\newblock Quasi-exact solvability.
\newblock {\em Contemporary Mathematics}, 160:113--113, 1994.

\bibitem{ushveridze1994quasi}
Alexander~G Ushveridze.
\newblock {\em Quasi-exactly solvable models in quantum mechanics}.
\newblock CRC Press, 1994.

\bibitem{braunstein1987generalized}
Samuel~L Braunstein and Robert~I McLachlan.
\newblock Generalized squeezing.
\newblock {\em Physical Review A}, 35(4):1659, 1987.

\bibitem{pemantle2013analytic}
Robin Pemantle and Mark~C Wilson.
\newblock {\em Analytic combinatorics in several variables}, volume 140.
\newblock Cambridge University Press, 2013.

\bibitem{labelle1990generalized}
Jacques Labelle and Yeong-Nan Yeh.
\newblock {Generalized Dyck paths}.
\newblock {\em Discrete mathematics}, 82(1):1--6, 1990.

\bibitem{koshy2008catalan}
Thomas Koshy.
\newblock {\em Catalan numbers with applications}.
\newblock Oxford University Press, 2008.

\bibitem{feller1967}
William Feller.
\newblock {\em An Introduction to Probability Theory and Its Applications. Vol.
  I}.
\newblock Wiley, New York, 1967.

\bibitem{yurke1986}
Bernard Yurke, Samuel~L McCall, and John~R Klauder.
\newblock {$SU(2)$ and $SU(1,1)$ interferometers}.
\newblock {\em Physical Review A}, 33(6):4033, 1986.

\bibitem{PhysRevLett104251102}
Tobias Eberle, Sebastian Steinlechner, J\"oran Bauchrowitz, Vitus H\"andchen,
  Henning Vahlbruch, Moritz Mehmet, Helge M\"uller-Ebhardt, and Roman Schnabel.
\newblock {Quantum Enhancement of the Zero-Area Sagnac Interferometer Topology
  for Gravitational Wave Detection}.
\newblock {\em Physical Review Letters}, 104:251102, Jun 2010.

\bibitem{fisher1984impossibility}
Robert~A Fisher, Michael~Martin Nieto, and Vernon~D Sandberg.
\newblock Impossibility of naively generalizing squeezed coherent states.
\newblock {\em Physical Review D}, 29(6):1107, 1984.

\bibitem{baker1975essentials}
George~Allen Baker.
\newblock {\em Essentials of Pad{\'e} approximants}.
\newblock Academic press New York, 1975.

\bibitem{bradler2015one}
Kamil Br\'adler and Chris Adami.
\newblock {One-shot decoupling and Page curves from a dynamical model for black
  hole evaporation}.
\newblock {\em arXiv preprint arXiv:1505.02840}, 2015.

\end{thebibliography}

\end{document}